\documentclass[10pt]{llncs}
\usepackage{enumerate}
\usepackage{tikz}
\usetikzlibrary{arrows,shapes,trees}
\usepackage{latexsym}
\usepackage{amssymb}
\usepackage[cmex10]{amsmath}
\usepackage{stmaryrd}
\usepackage{enumerate}




\newcounter{myThmCounter}
\spnewtheorem{thmNo}[myThmCounter]{Theorem}{\normalfont\bfseries}{\itshape}
\spnewtheorem{lemNo}[myThmCounter]{Lemma}{\normalfont\bfseries}{\itshape}

\newenvironment{theoremNo}[1]{\setcounter{myThmCounter}{0#1}\addtocounter{myThmCounter}{-1}\begin{thmNo}}{\end{thmNo}} 
\newenvironment{lemmaNo}[1]{\setcounter{myThmCounter}{0#1}\addtocounter{myThmCounter}{-1}\begin{lemNo}}{\end{lemNo}}

\newcommand{\st}{\mathrel{:}}

\newcommand{\KKK}{\mathcal{K}}
\newcommand{\QQ}{\mathbb{Q}}
\newcommand{\RR}{\mathbb{R}}
\newcommand{\ZZ}{\mathbb{Z}}
\newcommand{\until}[3]{#2 \mathrel{\mathbf{U}_{#3}} #1}
\newcommand{\since}[3]{#2 \mathrel{\mathbf{S}_{#3}} #1}
\newcommand{\Fcount}[1]{\mathrel{\mathbf{C}_{#1}}}
\newcommand{\Pcount}[1]{\mathrel{\mathbf{\overline{C}}_{#1}}}
\newcommand{\fDia}{\Diamond}
\newcommand{\pDia}{\rlap{\hspace{0.24em}\raisebox{1pt}{-}}\Diamond}
\newcommand{\fBox}{\Box}
\newcommand{\pBox}{\rlap{\hspace{0.21em}\raisebox{1pt}{-}}\Box}
\newcommand{\true}{\mathbf{true}}
\newcommand{\false}{\mathbf{false}}

\newcommand{\ltl}{\textrm{LTL}}
\newcommand{\mtl}{\textrm{MTL}}
\newcommand{\fo}{\textrm{FO}}

\newcommand{\MLO}{$\fo(<)$}
\newcommand{\domain}{\mathbb{R}}

\newcommand{\movetoapp}[1]{}

\begin{document}
\title{When is Metric Temporal Logic\\Expressively Complete?}

\author{Paul Hunter}
\institute{Universit\'e Libre de Bruxelles (ULB), Belgium\\\email{paul.hunter@cs.ox.ac.uk}}

\maketitle

\begin{abstract}
A seminal result of Kamp is that over the reals Linear Temporal Logic (\ltl) has
the same expressive power as first-order logic with binary order relation 
$<$ and monadic predicates.  A key question is whether there exists an analogue
of Kamp's theorem for Metric Temporal Logic (\mtl) -- a generalization of \ltl\ in which the Until and Since modalities are annotated with
intervals that express metric constraints. 
Hirshfeld and Rabinovich gave a negative answer, showing that first-order logic with binary
order relation $<$ and unary function $+1$ is strictly more expressive
than MTL with integer constants. However, a recent result of Hunter, 
Ouaknine and Worrell shows that with rational timing constants, 
MTL has the same expressive power as first-order logic, giving a positive answer.
In this paper we generalize these results by giving a precise characterization of those sets of constants for which MTL 
and first-order logic have the same expressive power.  
We also show that full first-order expressiveness can be recovered with the addition of counting modalities,
strongly supporting the assertion of Hirshfeld and Rabinovich that Q2MLO is one of the most expressive decidable fragments of $\fo(<,+1)$.
\end{abstract}

\section{Introduction}
One of the best-known and most widely studied logics in specification
and verification is \emph{Linear Temporal Logic (\ltl)}: temporal logic
with the modalities \emph{Until} and \emph{Since}.  For discrete-time
systems one considers interpretations of \ltl\ over the integers
$(\mathbb{Z},<)$, and for continuous-time systems one considers
interpretations over the reals $(\mathbb{R},<)$.  A celebrated result
of Kamp~\cite{Kamp68} is that, over both $(\mathbb{Z},<)$ and
$(\mathbb{R},<)$, \ltl\ has the same expressiveness as the \emph{Monadic
  Logic of Order (\MLO)}: first-order logic with binary order relation
$<$ and uninterpreted monadic predicates.  Thus we can benefit from
the appealing variable-free syntax and elementary decision procedures
of \ltl, while retaining the expressiveness and canonicity of
first-order logic.

Over the reals {\MLO} cannot express quantitative properties, such as,
``every request is followed by a response within one time unit''.
This motivates the introduction of \emph{Monadic Logic of Order and
  Metric $\fo_\KKK$}, which augments {\MLO} with a family of unary
function symbols $+c$, $c \in \KKK$ where $\KKK \subseteq \RR$ is some set
of timing constants.
Common choices for $\KKK$ are $\ZZ$, $\{1\}$ (equivalent to $\ZZ$) and $\QQ$, 
however sets of constants such as $\{1,\sqrt{2}\}$ or $\RR$ have practical application
in the specification of systems with two or more timing devices which are
initially synchronized but have independent unit time length. 
We observe that
with simple arithmetic any integer linear combination of elements in $\KKK$ can be derived as a unary function, thus
we restrict our attention to sets that are closed under integer linear combinations, that is,
additive subgroups of $\RR$.

There have been a variety of proposals for quantitative temporal logics, with
modalities definable in {$\fo_\KKK$}\@ (see,
e.g.,~\cite{AH-survey,AH93,AH94,Hen98,HRS98,HR04,HOW13}).  
Typically these temporal logics can be seen as quantitative
extensions of \ltl\@.  However, until~\cite{HOW13} there was
no fully satisfactory counterpart to Kamp's theorem in the quantitative
setting.

The best-known quantitative temporal logic is \emph{Metric Temporal
  Logic ($\mtl$)}, introduced over 20 years ago in~\cite{koymans}.  $\mtl$
arises by annotating the temporal modalities of LTL with real intervals 
representing metric constraints.  It is usual to restrict the endpoints of 
the intervals to some $\KKK \subseteq \RR$, and as we are interested in 
various choices of $\KKK$ we denote this as $\mtl_\KKK$.
Since the $\mtl_\KKK$ operators are definable in {$\fo_\KKK$}, it is immediate that
one can translate $\mtl_\KKK$ into {$\fo_\KKK$}\@.  The main question addressed by this
paper is when does the converse apply?

Several previous results, illustrating that the question is non-trivial, can be succinctly specified with our notation: 
\begin{itemize}
\item Kamp~\cite{Kamp68}: $\mtl_{\{0\}} = \ltl = \textrm{\MLO} = \fo_{\{0\}}$.
\item Hirshfeld and Rabinovich~\cite{HR07}: $\mtl_\ZZ \neq \fo_{\{1\}} = \fo_\ZZ$.
\item Hunter, Ouaknine and Worrell~\cite{HOW13}: $\mtl_\QQ = \fo_\QQ$.
\end{itemize}

The first main result of this paper generalizes these results by giving a precise characterization of when
$\mtl_\KKK$ is expressively complete.

\begin{theorem}\label{thm:main1}
Let $\KKK$ be an additive subgroup of $\RR$.  Then $\mtl_\KKK = \fo_\KKK$ if and only if $\KKK$ is dense.
\end{theorem}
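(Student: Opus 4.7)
The plan is to handle the two directions separately, generalising Hirshfeld--Rabinovich on one side and Hunter--Ouaknine--Worrell on the other.

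For the ``only if'' direction, recall the structure theorem: every additive subgroup of $\RR$ is either dense or of the form $c\ZZ$ for some $c \geq 0$. The trivial case $\KKK = \{0\}$ is degenerate ($\mtl_{\{0\}}=\fo_{\{0\}}$ by Kamp), so the theorem should be read as implicitly restricting to non-trivial $\KKK$, and it remains to treat $\KKK = c\ZZ$ with $c > 0$. Here I will use the order-automorphism $t \mapsto t/c$ of $(\RR,<)$: it lifts to a time-rescaling bijection on timed models that translates $\mtl_{c\ZZ}$-formulas to $\mtl_\ZZ$-formulas and $\fo_{c\ZZ}$-formulas to $\fo_\ZZ$-formulas. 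An equality $\mtl_{c\ZZ} = \fo_{c\ZZ}$ would therefore transport to $\mtl_\ZZ = \fo_\ZZ$, contradicting Hirshfeld and Rabinovich.

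For the ``if'' direction the plan is to generalise the HOW13 translation from $\QQ$ to any dense $\KKK$. Given $\phi \in \fo_\KKK$, since $\phi$ uses only finitely many metric constants I will induct on the formula structure to produce an equivalent $\mtl_\KKK$-formula. The HOW13 argument uses density of $\QQ$ at two levels: to subdivide metric intervals with $\QQ$-endpoints into arbitrarily small pieces, and in a separation-style normal form that replaces each bounded quantifier by an $\mtl$-modality. Both steps should carry over: density of $\KKK$ in $\RR$ supplies arbitrarily fine interpolants, and the additive closure of $\KKK$ supplies the integer linear combinations that appear in the construction.

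The main obstacle I anticipate is any step of HOW13 that uses that $\QQ$ is a \emph{field}, not merely a group---in particular, closure under rational scaling. A generic dense subgroup such as $\ZZ + \sqrt{2}\,\ZZ$ carries only its group structure. Wherever HOW13 invokes a scaling ``$c/n$'', I would replace it with an approximation: density of $\KKK$ yields $\varepsilon \in \KKK$ with $|\varepsilon - c/n|$ arbitrarily small, and the integer multiples $k\varepsilon$ remain in $\KKK$, so the required subdivisions can be realised using only additive group operations, possibly at the cost of folding the error terms into the neighbouring intervals. Once every field-level step of HOW13 has been rewritten in this purely group-theoretic, density-based form, the translation $\fo_\KKK \to \mtl_\KKK$ should go through; the reverse inclusion is immediate, since every $\mtl_\KKK$-modality is itself definable in $\fo_\KKK$.
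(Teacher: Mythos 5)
Your ``only if'' direction is exactly the paper's argument: a non-dense additive subgroup is of the form $\epsilon\ZZ$ with $\epsilon>0$, and the rescaling $t\mapsto t/\epsilon$ transports an equality $\mtl_{\epsilon\ZZ}=\fo_{\epsilon\ZZ}$ to $\mtl_\ZZ=\fo_\ZZ$, contradicting Hirshfeld--Rabinovich. (A small point: with the paper's definition of density -- density in itself -- the trivial group $\{0\}$ is vacuously dense and is handled by Kamp's theorem, so no implicit restriction to non-trivial $\KKK$ is needed.)

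The ``if'' direction has a genuine gap. You locate the use of the field structure of $\QQ$ only in the subdivision of intervals into pieces of width $c/n$, and there your fix (take a small $\nu\in\KKK$ by density and use its integer multiples) is indeed what the paper does in the proof of the decomposition lemma (Lemma~\ref{lem:decomp}, choosing $0<\nu\le \frac{c}{2n}$ in $\KKK$). But the heavier use of $\QQ$ in~\cite{HOW13} is elsewhere: a finite set of rational constants has a common denominator, so the whole formula can be rescaled into $\fo_{\{1\}}$, and the regularity of the integer grid is then exploited to reduce every bounded formula to unit formulas in $\mathrm{Bet}_{\{0\}}(x,x+1)$; in particular this is how occurrences of $+c$ on \emph{bound} variables (in inequalities such as $y+c<z$) are eliminated. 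For a dense subgroup such as $\ZZ+\sqrt{2}\,\ZZ$ the constants are incommensurable, no rescaling to a common unit exists, and ``folding the error into neighbouring intervals'' does not apply, because the obstacle is not approximating a subdivision but eliminating shifts applied to quantified variables. The paper has to do genuinely new work precisely here: it introduces the hierarchical interval normal form (HIFs, Lemma~\ref{lem:HIF2}) in order to remove unary functions from bound variables (Lemma~\ref{lem:remove}), then reduces to $\mathrm{Bet}_{\{0\}}(x,x+c)$ formulas by a composition argument over the finitely many constants applied to the free variable; it also refines the separation lemma, requiring $N>\mathit{pr}(\theta)+c$ instead of $+1$, so that the auxiliary propositions $F_c$ introduced in the final induction can be resolved in the distant past and future. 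Without an argument replacing these steps, your plan does not go through as stated.
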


Two consequences of this theorem are that $\mtl_\RR$ is expressively complete (for $\fo_\RR$), and, in contrast to $\mtl_\ZZ \neq \fo_{\{1\}}$, $\mtl$ with interval endpoints taken from $\ZZ[\sqrt{2}] = \{a+b\sqrt{2}\st a,b \in \ZZ\}$ is able to express all of $\fo_{\{1,\sqrt{2}\}}$.

It also follows from our proof of Theorem~\ref{thm:main1} and the result of~\cite{HR07} that if $\mtl_K \neq \fo_K$ then even with 
a (possibly infinite) set of arbitrary additional modal operators of bounded quantifier depth the inequality remains.   
Examples of separating formula are, for sufficiently large $n$,  the modal operator $\Fcount{n}\varphi$
which asserts that $\varphi$ occurs at least $n$ times in the next time interval and its temporal dual $\Pcount{n} \varphi$.
Our second main result is to show that for expressive completeness it is sufficient to add the (infinite) set of these counting operators.
That is, if we define $\mtl_\ZZ\textrm{+C}$ as the logic of $\mtl_\ZZ$ with the additional operators $\{\Fcount{n}, \Pcount{n}\st n \in \mathbb{N}\}$, then
\begin{theorem}\label{thm:main2}
$\mtl_\ZZ\textrm{+C}$ has the same expressive power as $\fo_\ZZ$.
\end{theorem}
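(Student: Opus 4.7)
The easy direction $\mtl_\ZZ\textrm{+C} \subseteq \fo_\ZZ$ is immediate, since each of $\Fcount{n}\varphi$ and $\Pcount{n}\varphi$ is directly definable in $\fo_\ZZ$ by an explicit count of pairwise distinct witnesses in the adjacent unit interval. For the substantive direction, my plan is to reduce to Theorem~\ref{thm:main1} by using $\QQ$ as an intermediary. Given a formula $\phi \in \fo_\ZZ$, I view it as a formula of $\fo_\QQ$ and apply Theorem~\ref{thm:main1} with $\KKK = \QQ$ (which is dense) to obtain an equivalent $\mtl_\QQ$ formula $\psi$. The task then reduces to showing that any $\mtl_\QQ$ formula which happens to define an $\fo_\ZZ$-expressible property can be rewritten in $\mtl_\ZZ\textrm{+C}$.

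The core lemma I would establish is that every rational endpoint appearing in such a $\psi$ can be replaced by an integer endpoint together with a finite Boolean combination of counting modalities. Since $\psi$ is finite, all of its rational constants share a common denominator $N$, so its modalities live inside the lattice $\tfrac{1}{N}\ZZ$. I would then argue by induction on the structure of $\psi$---pushing the rewriting through Boolean connectives and nesting through Until and Since---that the ability of the sub-unit modalities in $\psi$ to distinguish signals is, modulo $\fo_\ZZ$-equivalence, entirely captured by counts of satisfying points inside integer-length intervals, which is precisely the information exposed by $\Fcount{n}$ and $\Pcount{n}$.

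The main obstacle is making this replacement rigorous: one must show that when $\psi$ defines a property also in $\fo_\ZZ$, its sub-integer positional information is redundant and can be discarded in favour of counts. A natural route is an Ehrenfeucht-Fra\"iss\'e style argument. I would define a suitable notion of $\mtl_\ZZ\textrm{+C}$-equivalence at depth $k$ and show that any two signals so related are also $\fo_\ZZ$-equivalent at a corresponding depth. The crucial role of counting appears in the back-and-forth step: agreement on all $\Fcount{n}$ and $\Pcount{n}$ formulas forces matching counts of each atomic type inside every unit interval, so one can reorder the events within such an interval in a way that preserves the $\fo_\ZZ$ theory without any $\mtl_\ZZ\textrm{+C}$ formula of bounded depth being able to detect the rearrangement. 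Alternatively, one could attempt to adapt the separation/normal-form technology underpinning the $\mtl_\QQ = \fo_\QQ$ result, specialising it so that every appeal to a strictly rational endpoint is replaced by an instance of $\Fcount{n}$ or $\Pcount{n}$, and then verify directly that the resulting normal form is expressive enough for all of $\fo_\ZZ$.
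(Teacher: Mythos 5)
You have a genuine gap, and it sits exactly where the theorem's difficulty lies. The detour through $\mtl_\QQ$ makes no progress: since every $\fo_\ZZ$ formula is trivially an $\fo_\QQ$ formula and hence, by Theorem~\ref{thm:main1}, an $\mtl_\QQ$ formula, your ``core lemma''---that any $\mtl_\QQ$ formula defining an $\fo_\ZZ$-expressible property can be rewritten in $\mtl_\ZZ\textrm{+C}$---is not a stepping stone but a restatement of the whole theorem. Moreover the structural induction you propose for it is unavailable: a subformula of an $\fo_\ZZ$-definable $\mtl_\QQ$ formula need not itself define an $\fo_\ZZ$-expressible property (e.g.\ $\fDia_{(0,1/2)}P$ can occur as a subformula), so the inductive hypothesis ``modulo $\fo_\ZZ$-equivalence the sub-unit information is captured by counts'' cannot be pushed through Until/Since nesting. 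The Ehrenfeucht--Fra\"iss\'e sketch does not repair this: to get expressive completeness you would need both that $\mtl_\ZZ\textrm{+C}$-equivalence at a suitable finite depth \emph{with bounded constants} (already at modal depth one there are infinitely many inequivalent formulas $\fDia_{(0,n)}P$) implies $\fo_\ZZ$-equivalence at a given quantifier depth, and that each such equivalence class is itself definable by a single $\mtl_\ZZ\textrm{+C}$ formula; neither is addressed. Your back-and-forth step---reordering points inside a unit interval so that no bounded-depth $\mtl_\ZZ\textrm{+C}$ formula detects it---argues about what the logic \emph{cannot} see, which is the opposite of the transfer you need, and reordering does in general change the $\fo_\ZZ$ theory, since $\fo_\ZZ$ constrains order and metric position within unit intervals.

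For comparison, the paper's proof reuses all of the machinery of Theorem~\ref{thm:main1} for $\KKK=\ZZ$ (removal of unary functions, reduction to $\mathrm{Bet}_{\{0\}}$ formulas, separation): density is needed only in Lemma~\ref{lem:decomp}, to express decomposition formulas, and it is exactly that step which gets replaced. The paper introduces PQ2MLO, writes an explicit formula $\delta'(x)$ from truncations $\delta_j$ of the decomposition formula $\delta(x,x+1)$, proves $\delta\equiv\delta'$ by a witness-merging argument, and then translates PQ2MLO into $\mtl_\ZZ\textrm{+C}$ by induction on the nesting of punctual operators, invoking the nontrivial theorem of Hirshfeld and Rabinovich that MITL with counting has the same expressive power as Q2MLO (Theorem~\ref{thm:Q2MLO}, \cite{HR06}). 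Your proposal supplies a substitute for neither ingredient; showing that counting over unit intervals recovers the positional content of decomposition formulas is precisely where the work lies, and is why the infinite family of counting modalities---rather than any bounded-quantifier-depth family of modalities, which provably cannot suffice \cite{HR07}---is needed at all.
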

In~\cite{HR04} Hirshfeld and Rabinovich considered the addition of counting modalities to MITL: Metric Temporal Logic without singleton (punctual) intervals.  
They showed the resulting logic had the same expressive power as Q2MLO, a decidable fragment of $\fo_{\{1\}}$.  Our result supports their claim that Q2MLO
is one of the most expressive decidable fragments of $\fo_{\{1\}}$: by adding the operators $\fDia_{\{1\}}X$ ($X$ occurs in exactly one time unit) and
$\pDia_{\{1\}}X$ ($X$ occurred exactly one time unit ago) the resulting logic has the full expressive power of $\fo_{\{1\}}$.


\section{Preliminaries}\label{sec:prelim}
In this section we define the concepts and notation used throughout the paper.

We say $\KKK \subseteq \RR$ is \emph{dense} if for all $a < b \in \KKK$, there exists $c \in \KKK$ such that $a < c < b$. 
In the following, $\KKK \subseteq \RR$ is an additive subgroup of $\RR$.

\subsubsection*{First-order logic.}
Formulas of \emph{Monadic Logic of Order and Metric with constants $\KKK$ ($\fo_\KKK$)} are
first-order formulas over a signature with a binary relation symbol
$<$, an infinite collection of unary predicate symbols
$P_1,P_2,\ldots$, and a (possibly infinite) family of unary function symbols
$+c$, $c \in \KKK$.  Formally, the terms of {$\fo_\KKK$} are
generated by the grammar $t::=x\mid t+c$, where $x$ is a variable and
$c \in \KKK$.  Formulas of {$\fo_\KKK$} are given by the
following syntax:
\[ \varphi ::= \true \mid P_i(t) \mid t < t \mid \varphi \wedge
\varphi \mid \neg \varphi \mid \exists x \, \varphi \, , \] where $x$
denotes a variable and $t$ a term.  

We consider interpretations of {$\fo_\KKK$} over the real line, $\RR$, with the natural
interpretations of $<$ and $+c$.  It follows that a structure for
{$\fo_\KKK$} is determined by an interpretation of the monadic predicates.

Given terms $t_2$ and $t_2$, we define $\mathrm{Bet}_\KKK(t_1,t_2)$ to consist
of the {$\fo_\KKK$} formulas in which
\begin{enumerate}[(i)]
\item each subformula $\exists z\, \psi$ has the form
$\exists z\,((t_1 < z<t_2) \wedge \chi)$,
  i.e., each quantifier is relativized to the open interval between
  $t_1$ and $t_2$;
\item in each atomic subformula $P(t)$ the term
  $t$ is a bound occurrence of a variable.
\end{enumerate}

Clauses (i) and (ii) ensure that a formula in $\mathrm{Bet}_\KKK(t_1,t_2)$
only refers to the values of monadic predicates on points in the open
interval $(t_1,t_2)$.  We say that a formula $\varphi(x)$ in
$\mathrm{Bet}_\KKK(x-N,x+N)$ is \emph{$N$-bounded}.

\subsubsection*{Metric Temporal Logic.}
Given a set $\boldsymbol{P}$ of atomic propositions, the formulas of
\emph{Metric Temporal Logic with constants $\KKK$ ($\mtl_\KKK$)} are built from $\boldsymbol{P}$
using boolean connectives and time-constrained versions of the
\emph{Until} and \emph{Since} operators $\until{}{}{}$ and
$\since{}{}{}$ as follows:
\[ \varphi ::= \true\mid P \mid \varphi \wedge \varphi \mid \neg \varphi \mid
\until{\varphi}{\varphi}{I} \mid \since{\varphi}{\varphi}{I} \, , \]
where $P \in \boldsymbol{P}$ and $I\subseteq (0,\infty)$ is an
interval with endpoints in $\KKK_{\geq 0}\cup\{\infty\}$.

Intuitively, the meaning of $\until{\varphi_2}{\varphi_1}{I}$ is that
$\varphi_2$ will hold at some time in the interval $I$, and until then
$\varphi_1$ holds. 
More precisely, the semantics of \(\mtl_\KKK\) are defined as follows.  A
\emph{signal} is a function $f:\domain \to 2^{\boldsymbol{P}}$.  Given a signal $f$
and $r \in \domain$, we define the satisfaction relation $f,r \models
\varphi$ by induction over $\varphi$ as follows:
\begin{itemize}
\item $f,r \models p$ iff $p \in f(r)$,
\item $f,r \models \neg \varphi$ iff $f,r \not\models \varphi$,
\item $f,r \models \varphi_1 \wedge \varphi_2$ iff $f,r \models \varphi_1$ and $f,r \models \varphi_2$,
\item $f,r \models \until{\varphi_2}{\varphi_1}{I}$ iff there exists
  $t>r$ such that $t-r \in I$, $f,t \models \varphi_2$ and $f,u \models
  \varphi_1$ for all $u\in (r,t)$,
\item $f,r \models \since{\varphi_2}{\varphi_1}{I}$ iff there exists
  $t<r$ such that $r-t \in I$, $f,t \models \varphi_2$ and $f,u
  \models \varphi_1$ for all $u \in (t,r)$.
\end{itemize}

LTL can be seen as a restriction of MTL with only the interval
$I=(0,\infty)$, so in particular $\mathrm{LTL} = \mtl_{\{0\}}$.  
MITL is a restriction of $\mtl_\ZZ$ where singleton intervals, that is intervals of the form $\{c\}$, 
do not occur in the $\until{}{}{}$ and $\since{}{}{}$ operators.


We say the $\until{}{}{I}$ and $\since{}{}{I}$ operators are
\emph{bounded} if $I$ is bounded, otherwise we say that the operators
are \emph{unbounded}.

We introduce the derived connectives $\fDia_I \varphi:=
\until{\varphi}{\true}{I}$ ($\varphi$ will be true at some point in
interval $I$) and $\pDia_I \varphi :=\since{\varphi}{\true}{I}$
($\varphi$ was true at some point in interval $I$ in the past).  We
also have the dual connectives $\fBox_I \varphi := \neg \fDia_I \neg
\varphi$ ($\varphi$ will hold at all times in interval $I$ in the
future) and $\pBox_I := \neg \pDia_I \neg \varphi$ ($\varphi$ was true
at all times in interval $I$ in the past).

\paragraph{Counting modalities. }
The counting modalities $\Fcount{n}\varphi$ and $\Pcount{n}\varphi$ are defined for all $n \in \mathbb{N}$
and are interpreted as $\varphi$ will be true for at least $n$ distinct occasions in the next/previous time unit.
That is, for any signal $f$ and $r \in \RR$:
\begin{itemize}
\item $f,r \models \Fcount{n}\varphi$ iff there exists $r_1 < \cdots < r_n \in (r,r+1)$ with $f,r_i \models \varphi$ for all $i$.
\item $f,r \models \Pcount{n}\varphi$ iff there exists $r_1 < \cdots < r_n \in (r-1,r)$ with $f,r_i \models \varphi$ for all $i$.
\end{itemize}
We define \emph{$\mtl_\KKK$ with counting ($\mtl_\KKK\textrm{+C}$)} to be the extension of $\mtl_\KKK$ by the operations $\{\Fcount{n}, \Pcount{n} \st n \in \mathbb{N}\}$. 

\subsubsection*{Expressive Equivalence.}
\label{sec:exp-equiv}
Given a set $\boldsymbol{P}=\{P_1, \ldots, P_m\}$ of monadic
predicates, a signal $f:\domain \to 2^{\boldsymbol{P}}$ defines an
interpretation of each $P_i$, where $P_i(r)$ iff $P_i \in
f(r)$.  As observed earlier, this is sufficient to define the
model-theoretic semantics of {$\fo_\KKK$}, enabling us to relate the
semantics of {$\fo_\KKK$} and $\mtl_\KKK$.

Let $\varphi(x)$ be an {$\fo_\KKK$} formula with one free variable and $\psi$
an $\mtl_\KKK$ formula.  We say $\varphi$ and $\psi$ are \emph{equivalent} if
for all signals $f$ and $r \in \domain$:
\[ f \models \varphi[r] \Longleftrightarrow f,r \models \psi.\]
We say $\mtl_\KKK$ and $\fo_\KKK$ have the same expressive power, written $\mtl_\KKK = \fo_\KKK$,
if for all formulas with one free variable $\varphi(x) \in \fo_\KKK$ there is an equivalent formula $\varphi^\dag \in \mtl_\KKK$ and vice versa.

\section{Characterization of expressively complete $\mtl$}\label{sec:char}
The goal of this section is to prove:
\begin{theoremNo}{\ref{thm:main1}}
Let $\KKK$ be an additive subgroup of $\RR$.  Then $\mtl_\KKK = \fo_\KKK$ if and only if $\KKK$ is dense.
\end{theoremNo}

First we consider the ``only if'' direction.  Central to this is the following easily proven result:
\begin{lemma}\label{lem:nondense}
Let $\KKK$ be an additive subgroup of $\RR$.  If $\KKK$ is not dense then $\KKK = \epsilon \ZZ$ for some $\epsilon > 0$.
\end{lemma}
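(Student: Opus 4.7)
The plan is to exploit the non-density hypothesis to show that the positive part of $\KKK$ is bounded away from $0$, then use Euclidean division to conclude that $\KKK$ is generated by the infimum. This is a standard classification of closed subgroups of $\RR$; the non-density assumption precisely rules out the ``dense'' alternative.

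First, I would extract a positive gap from the failure of density. By assumption there exist $a < b$ in $\KKK$ with no element of $\KKK$ strictly between them. Then $\KKK \cap (0, b-a) = \emptyset$: any $y \in \KKK$ with $0 < y < b-a$ would give $a + y \in \KKK \cap (a,b)$, contradicting the choice of $a,b$. Setting $\epsilon := \inf\{x \in \KKK \st x > 0\}$, this shows $\epsilon \geq b-a > 0$.

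Next, I would argue $\epsilon \in \KKK$. Otherwise, using the definition of infimum twice, I can pick $x, y \in \KKK$ with $\epsilon < y < x < 2\epsilon$ (both inequalities with $\epsilon$ strict since $\epsilon \notin \KKK$). Then $x - y \in \KKK$ and $0 < x - y < \epsilon$, contradicting the definition of $\epsilon$. Finally, for any $x \in \KKK$, writing $x = n\epsilon + r$ with $n \in \ZZ$ and $0 \leq r < \epsilon$, closure of $\KKK$ under integer linear combinations gives $r = x - n\epsilon \in \KKK$; since $r < \epsilon = \inf\KKK_{>0}$, we must have $r = 0$, so $x \in \epsilon\ZZ$. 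The reverse inclusion $\epsilon\ZZ \subseteq \KKK$ is immediate from $\epsilon \in \KKK$ and $\KKK$ being a subgroup.

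There is no real obstacle here; each step is a short calculation. The only subtlety worth highlighting is that the definition of density in the paper is vacuous on $\KKK = \{0\}$, so that degenerate case does not arise under the hypothesis, and the argument correctly produces a strictly positive generator $\epsilon$.
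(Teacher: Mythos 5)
Your proof is correct and follows essentially the same route as the paper: both arguments use the non-density witnesses together with closure under subtraction to show that the positive elements of $\KKK$ are bounded away from $0$, then identify a smallest positive element $\epsilon$ and conclude $\KKK=\epsilon\ZZ$ by division with remainder. The only difference is cosmetic --- the paper secures the minimum via a uniform-discreteness lemma and finiteness of $\KKK\cap(0,\alpha]$, whereas you show directly that $\inf\{x\in\KKK \st x>0\}$ is attained; your remark on the degenerate case $\KKK=\{0\}$ being vacuously dense is also consistent with the paper's definition.
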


It now follows by a simple scaling argument and the result $\mtl_\ZZ \neq \fo_\ZZ$~\cite{HR07}  that if $\KKK$ is not dense then $\mtl_\KKK \neq \fo_\KKK$.  We refer the reader to the appendix for details.

In fact~\cite{HR07} showed a much stronger result: even with (possibly infinite) additional arbitrary modal operators of bounded quantifier depth $\mtl_\ZZ$ cannot fully express $\fo_\ZZ$.  This result clearly carries over to $\KKK = \epsilon \ZZ$, thus in the non-dense case $\mtl_\KKK$ is ``quite far'' from $\fo_\KKK$.
\begin{corollary}
Let $\KKK$ be a non-dense additive subgroup of $\RR$.   With additional arbitrary modal operators of bounded quantifier depth $\mtl_\KKK$ cannot fully express $\fo_\KKK$.
\end{corollary}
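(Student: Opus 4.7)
The plan is to reduce the statement, via a routine time-rescaling argument, to the strong separation result of \cite{HR07} referenced immediately above.

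First, by Lemma~\ref{lem:nondense}, a non-dense additive subgroup $\KKK$ of $\RR$ has the form $\epsilon \ZZ$ for some $\epsilon > 0$; it therefore suffices to establish the corollary for $\KKK = \epsilon \ZZ$.

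Next I would introduce the time-scaling map $\sigma_\epsilon : \RR \to \RR$, $t \mapsto t/\epsilon$, acting on signals by $f \mapsto f^\sigma$ with $f^\sigma(t) := f(\epsilon t)$.  A straightforward induction on formula structure shows that replacing each term $x + n\epsilon$ by $x + n$ produces a bijection $\varphi \leftrightarrow \varphi^\sigma$ between $\fo_{\epsilon\ZZ}$-formulas and $\fo_\ZZ$-formulas which preserves quantifier depth and satisfies $f \models \varphi[r]$ if and only if $f^\sigma \models \varphi^\sigma[r/\epsilon]$.  Rescaling interval endpoints in the same way gives the analogous bijection between $\mtl_{\epsilon\ZZ}$- and $\mtl_\ZZ$-formulas.

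Now suppose, for contradiction, that there is a (possibly infinite) family $\mathcal{M}$ of additional modal operators for $\mtl_{\epsilon\ZZ}$, of uniformly bounded quantifier depth, such that the resulting extension of $\mtl_{\epsilon\ZZ}$ expresses every $\fo_{\epsilon\ZZ}$-formula.  Since each such modal operator is by definition specified by an $\fo_{\epsilon\ZZ}$-formula of quantifier depth at most some fixed $d$, the translation above yields a corresponding family $\mathcal{M}^\sigma$ of modal operators for $\mtl_\ZZ$, again of uniformly bounded quantifier depth, and the extension of $\mtl_\ZZ$ by $\mathcal{M}^\sigma$ expresses every $\fo_\ZZ$-formula; this contradicts \cite{HR07}.

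The only non-mechanical step is verifying that the translation $f \leftrightarrow f^\sigma$ commutes with the standard Until and Since operators of $\mtl_\KKK$ \emph{and} with the additional modal operators in $\mathcal{M}$, while preserving the quantifier-depth bound.  Because quantifier depth is a syntactic invariant of the underlying $\fo_\KKK$-formulas defining the modalities, and the translation simply substitutes $n$ for $n\epsilon$ in all constants and interval endpoints, this amounts to a routine induction on formula structure and poses no real obstacle.
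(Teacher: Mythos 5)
Your proof is correct and follows essentially the same route as the paper: the paper reduces to $\KKK=\epsilon\ZZ$ via Lemma~\ref{lem:nondense} and asserts that the Hirshfeld--Rabinovich result ``clearly carries over'' by the same rescaling argument it details in the appendix for the plain statement $\mtl_{\epsilon\ZZ}\neq\fo_{\epsilon\ZZ}$. Your only addition is to make explicit that the rescaling bijection preserves quantifier depth of the truth tables defining the extra modalities, which is exactly the observation the paper leaves implicit.
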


Returning to the ``if'' direction in the proof of Theorem~\ref{thm:main1}, we focus on the non-trivial case ($\KKK$ infinite), as the trivial case $\KKK = \{0\}$ is covered by Kamp's theorem~\cite{Kamp68}.
Our strategy parallels the proof of expressive completeness of $\mtl_\QQ$ in~\cite{HOW13}:  We first show expressive completeness for bounded formulas, and then, using a refinement of syntactic separation~\cite{G81,HOW13}, extend this to all $\fo_\KKK$ formulas.

\subsection{Expressive completeness for bounded formulas}
To show that bounded $\fo_\KKK$ formulas can be expressed by $\mtl_\KKK$ we proceed in a similar manner to~\cite{HOW13}.
\begin{enumerate}[Step 1. ]
\item We first remove any occurrence of a unary $+c$ function applied to a bound variable.
\item Using a composition argument (see e.g.~\cite{GPSS80,HR06}) we then reduce the problem to showing expressive completeness for formulas in $\mathrm{Bet}_{\{0\}}(x,x+c)$.
\item Exploiting a normal form of~\cite{GPSS80} and the denseness of $\KKK$ we show how an $\mtl_\KKK$ formula can express any formula in $\mathrm{Bet}_{\{0\}}(x,x+c)$, and hence any bounded formula.
\end{enumerate}
Our proof differs significantly to that of~\cite{HOW13} notably at Steps 1 and 2.  In~\cite{HOW13} the authors were able to scale $\fo_\QQ$ formulas to $\fo_{\{1\}}$ and then use the regularity of the integers to reduce the problem to formulas in $\mathrm{Bet}_{\{0\}}(x,x+1)$ (so-called \emph{unit-formulas}).  For more general $\KKK$ however neither of these steps are applicable so instead we introduce a normal form for $\fo_\KKK$ formulas which simplifies the removal of the unary functions.

\subsubsection*{Step 1. Removing unary functions.}
Given an $N$-bounded $\fo_\KKK$ formula with one free variable $x$, we show that it is equivalent to a $N'$-bounded formula (over a possibly larger set of monadic predicates, suitably interpreted) in which the unary functions are only applied to $x$.  We can remove occurrences of unary functions within the scope of monadic predicates by introducing new predicates.  That is, we replace $P(y+c)$ with $P^c(y)$, the intended interpretation of $P^c$ being $\{r \st r+c \in P\}$.  We will later replace $P^c(y)$ with $\fDia_{\{c\}} P$ when completing the translation to $\mtl_\KKK$.  Thus it suffices to demonstrate how to remove the unary functions from the scope of the $<$ operator.  For this we introduce a normal form where all inequality constraints are replaced with interval inclusions and the intervals satisfy the following hierarchical condition: if $y$ is quantified to $(x+c,z+c')$ then all intervals involving $y$ and a variable that was free when $y$ was quantified are affine translations of $(x+c,y)$ or $(y,z+c')$.  We note that the results of this section apply for any additive subgroup $\KKK \subseteq \RR$.  

\begin{definition}
An \emph{interval-guarded formula} is a $\fo_\KKK$-formula such that all quantifiers are of the form $\exists x \in (y+c,y'+c')$ where $y,y'$ are free variables and $c,c' \in \KKK$.
A \emph{Hierarchical Interval Formula (HIF)} is an interval-guarded $\fo_\KKK$-formula defined inductively as follows.
\begin{itemize}
\item Any $<$-free, quantifier-free $\fo_\KKK$-formula is a HIF;
\item If $\varphi_1, \varphi_2$ are HIFs then so are $\neg \varphi_1$ and $\varphi_1 \vee \varphi_2$; and
\item If $\varphi(\overline{x},y)$ is a HIF and there exists $x_l,x_r \in \overline{x}$ and $c_l,c_r \in \mathcal{K}$ such that
the only intervals in $\varphi$ involving $y$ and a free variable are of the form $(x_l+c_l+c,y+c)$ or $(y+c,x_r+c_r+c)$ for some $c \in \mathcal{K}$, then 
\(\exists y \in (x_l+c_l,x_r+c_r).\varphi(\overline{x},y)\) is a HIF.
\end{itemize}
\end{definition}

For space reasons we omit the proof that HIFs are a normal form for $N$-bounded $\fo_\KKK$ formulas with one free variable.  The full details can be found in the appendix.

\begin{lemma}\label{lem:HIF2}
Every $N$-bounded $\fo_\KKK$ formula with one free variable is equivalent to a HIF.
\end{lemma}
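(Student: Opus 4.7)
The plan is to prove Lemma~\ref{lem:HIF2} by induction on the quantifier depth of the $N$-bounded formula $\varphi(x)$, after two preliminary normalizations. First, replace every atomic comparison $t_1 < t_2$ with the equivalent interval-guarded existential $\exists w \in (t_1, t_2).\,\true$, whose body vacuously satisfies the hierarchical condition since no interval in it mentions $w$. Second, use $N$-boundedness to write every outermost quantifier as $\exists y \in (x-N,\,x+N)$. After this preprocessing the formula is $<$-free and interval-guarded, so only the hierarchical condition on quantifier guards remains to be enforced.

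For the inductive step, consider a subformula $\exists y.\psi(y,\bar z)$ where $\bar z$ lists the enclosing free variables and $\psi$ is already a HIF by the inductive hypothesis. Since the ambient domain is the totally ordered interval $(x-N,x+N)$, there are only finitely many orderings $\sigma$ of $y$ relative to $\bar z$, and in each $\sigma$ the variable $y$ sits between two specific neighbours $z_l^\sigma, z_r^\sigma \in \bar z \cup \{x-N,\,x+N\}$. Rewrite $\exists y.\psi$ as the disjunction $\bigvee_\sigma \exists y \in (z_l^\sigma,\,z_r^\sigma).\,(\psi\wedge\sigma)$. It then remains to coerce the body of each disjunct into HIF form relative to the outer guard $(z_l^\sigma, z_r^\sigma)$: every inner quantifier guard of $\psi$ that pairs $y$ with some other free variable $z_i$ must be rewritten so that $z_i$ is replaced by $z_l^\sigma$ or $z_r^\sigma$ (up to a $\KKK$-translate) on the side opposite $y$.

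I expect this coercion to be the main obstacle. The key manoeuvre is a three-way split using $z_r^\sigma$ (or $z_l^\sigma$) as a cut point: a subformula $\exists w \in (y+c,\, z_i+c').\,\chi$ becomes
\[
\exists w \in (y+c,\, z_r^\sigma+c).\,\chi \;\vee\; \chi[w \mapsto z_r^\sigma+c] \;\vee\; \exists w \in (z_r^\sigma+c,\, z_i+c').\,\chi,
\]
where the first disjunct is in the canonical hierarchical form, the second replaces $w$ by a term not involving $y$, and the third quantifies over an interval not mentioning $y$. The symmetric transformation handles intervals $(z_i+c,\,y+c')$. The real difficulty is ensuring that this bottom-up reshaping preserves the HIF invariant of quantifiers nested still more deeply inside $\chi$; the saving grace is that $z_r^\sigma + c$ is itself a $\KKK$-translate of a free variable, so any inner hierarchical condition that previously relied on $w$ as an endpoint-variable continues to hold with $z_r^\sigma$ in its place. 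Iterating over the nesting structure of $\psi$, guided by the ordering conjunct $\sigma$ (which justifies the relative position of $z_r^\sigma$ and $z_i$), yields an HIF equivalent to $\exists y.\psi$ and completes the induction.
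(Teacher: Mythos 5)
Your core manoeuvre---disjoining over the position of $y$ and cutting inner guards at the chosen guard endpoints---is essentially the cut used in the paper (its Lemma~\ref{lem:HIF}), but your bookkeeping of the ordering information has a genuine gap. The disjunct $\exists y \in (z_l^{\sigma}, z_r^{\sigma}).(\psi \wedge \sigma)$ is not a HIF: raw atoms such as $z_i + c < y$ are excluded by the base case (which demands $<$-free quantifier-free formulas), and if you encode them as guarded existentials like $\exists w \in (z_i+c,\, y).\true$ you have created exactly the kind of interval (involving $y$ and a free variable other than your chosen anchor) that the hierarchical condition forbids; eliminating it by your three-way cut is only an equivalence if you already know how $z_i+c$ compares with the cut point, which is again an ordering fact among free terms that cannot be stored inside the HIF. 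The same circularity affects the cut itself: $\exists w \in (y+c,\, z_i+c').\chi$ is equivalent to your three disjuncts only when $z_r^{\sigma}+c$ is known to lie in that interval; choosing $z_r^{\sigma}$ merely as the neighbouring \emph{variable}, rather than the least upper bound among all relevant $\KKK$-translates together with the constraints implicitly imposed by nonempty guards, this containment can fail and the third disjunct strictly weakens the formula. The paper escapes this by proving a stronger statement: every formula of $\mathrm{Bet}_\KKK(x-N,x+N)$ is equivalent to a disjunction of conjunctions $\kappa_i \wedge \varphi_i$ where $\varphi_i$ is a HIF and $\kappa_i$ is a conjunction of inequalities $x_j+c < x_k+c'$ kept \emph{outside} the HIF; these side constraints record the glb/lub choices and license the cuts, and they are discharged only at the very end, when a single free variable remains and they are trivially true or false. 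Your induction statement (each subformula with many free variables is itself equivalent to a HIF) does not support the argument.

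You also never treat inner guards of the form $(y+c,\, y+c')$, in which both endpoints are translates of the variable being quantified. Your coercion does not apply to them (no second variable occurs), yet their presence obstructs the guarantee that the chosen cut points lie inside every interval that has to be processed. The paper removes them first by a pigeonhole argument: since the formula lies in $\mathrm{Bet}_\KKK(x-N,x+N)$ and $\KKK$ is an additive subgroup, some point $x+n(c'-c)$ with $n$ ranging over a finite set of integers falls in $(y+c,\, y+c')$, so one can disjoin over $n$, add the corresponding constraints to $\kappa_i$, and cut there. This is the only place in the lemma where $N$-boundedness and the group structure of $\KKK$ are genuinely used; your proof invokes neither, which is a further sign that the induction as you state it cannot go through.
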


The final stage of this step is to remove the application of unary functions to all bound variables. 
\begin{lemma}\label{lem:remove}
Let $\KKK$ be an additive subgroup of $\RR$ and $\varphi(x)$ be an $N$-bounded $\fo_\KKK$ formula with one free variable.  Then $\varphi(x)$ is equivalent to an $N'$-bounded $\fo_\KKK$ formula $\varphi'(x)$ in which the unary functions are only applied to $x$.
\end{lemma}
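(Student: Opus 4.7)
Apply Lemma~\ref{lem:HIF2} to replace $\varphi(x)$ by an equivalent HIF $\psi(x)$. After the preprocessing described earlier (rewriting every atom $P(y+c)$ as a fresh predicate $P^c(y)$), every monadic atom in $\psi$ is of the form $P(y)$ with $y$ a variable, and every remaining occurrence of a unary function $+c$ lies in an interval endpoint of some quantifier range. The strategy is then to eliminate those $+c$ that are applied to bound variables via a single simultaneous change of bound variables, paying the cost of introducing further shifted monadic predicates.

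Assign to each bound variable $y$ of $\psi$ a shift $\sigma(y)\in\KKK$ inductively from outside in, with $\sigma(x):=0$ for the free variable. For an inner quantifier $\exists y\in(x_l+c_l,\,x_r+c_r)$ whose anchors have already been processed, set $\sigma(y):=\sigma(x_l)+c_l$ if $x_l$ is bound, else $\sigma(y):=\sigma(x_r)+c_r$ if $x_r$ is bound, else $\sigma(y):=0$. Now apply the substitution $y\mapsto y+\sigma(y)$ uniformly across $\psi$: each atom $P(y)$ becomes the atom $P^{\sigma(y)}(y)$, with $P^{\sigma(y)}$ a new monadic predicate interpreted by $P^{\sigma(y)}(r)\Leftrightarrow P(r+\sigma(y))$; and each quantifier range $(x_l+c_l,\,x_r+c_r)$ becomes $(x_l+(c_l+\sigma(x_l)-\sigma(y)),\,x_r+(c_r+\sigma(x_r)-\sigma(y)))$. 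By construction of $\sigma(y)$, the offset on any bound anchor of the new range is zero; only offsets on the free variable $x$ survive. The resulting formula is equivalent to $\varphi(x)$, has unary functions applied only to $x$, and is $N'$-bounded with $N':=N+\max_y|\sigma(y)|<\infty$, since $\psi$ has only finitely many bound variables.

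The principal obstacle is to check that $\sigma$ is well-defined: when both $x_l$ and $x_r$ of $\exists y$ are bound, the two prescriptions $\sigma(x_l)+c_l$ and $\sigma(x_r)+c_r$ must agree. This is exactly where the hierarchical condition of a HIF is used. Assuming without loss of generality that $\exists x_l$ is the inner of the two outer quantifiers binding $x_l$ and $x_r$, the interval $(x_l+c_l,\,x_r+c_r)$ appears in the body of $\exists x_l$ as an interval involving $x_l$ and the free-in-that-body variable $x_r$; the HIF condition at $\exists x_l$ forces this interval to take the form $(x_l+c,\,x_r'+c_r'+c)$, where $x_r'$ and $c_r'$ are the right-anchor data of $\exists x_l$. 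Hence $x_r=x_r'$ and $c_r=c_r'+c_l$, and the inductive hypothesis applied at $\exists x_l$'s level gives $\sigma(x_l)=\sigma(x_r')+c_r'$, so $\sigma(x_l)+c_l=\sigma(x_r')+c_r'+c_l=\sigma(x_r)+c_r$, as required. The remaining verifications—that the substitution preserves semantics under the chosen interpretation of the new predicates and that $N'$ is finite—are immediate by induction on quantifier nesting.
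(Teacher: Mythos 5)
Your proposal is correct and takes essentially the same approach as the paper: both rest on Lemma~\ref{lem:HIF2}, absorb shifts in predicate arguments into new predicates $P^c$, and change bound variables by $\KKK$-shifts, with the hierarchical interval condition supplying exactly the consistency you verify (your check $\sigma(x_l)+c_l=\sigma(x_r)+c_r$ is the paper's argument that the opposite endpoint is not a violation). The only difference is bookkeeping: you compute the accumulated shift $\sigma$ globally in one outside-in pass, whereas the paper applies the same substitution recursively to maximal violating subformulas.
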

\begin{proof}
Let us say there is a \emph{violation} if a unary function is applied to a variable other than $x$.  
Following Lemma~\ref{lem:HIF2} and the comments at the start of the section it suffices to consider HIFs and remove all violations from intervals.  We proceed from any maximal subformula of $\varphi(x)$, $\psi(x, \overline{y}) = \exists z \in (s,t).\theta(x,\overline{y},z)$ where there is a violation, say $t = y_j+c$ (the case for $s = y_j+c$ being similar).  Consider $\psi' = \exists z' \in (s-c,t-c).\theta(x,\overline{y},z'+c)$.  $\psi'$ is clearly equivalent to $\psi$ and is $(N+c)$-bounded.  It suffices to show that $s-c$ is not a violation as this implies all violations in $\psi'$ occur in proper subformulas and the result then follows by induction. The critical case is if $s = y_k+c'$. Then, as $\varphi$ is a HIF and $y_j$ and $y_k$ are bound in $\varphi$, it follows that $j \neq k$.  Suppose $j<k$. Then $y_j + c- c'$ must have been an endpoint on the interval constraining $y_k$ at the point where $y_k$ was quantified.  As $\psi$ is maximal, it follows that $c = c'$.  Likewise if $k<j$.  Therefore $s-c$ is not a violation.
\end{proof}

\subsubsection*{Step 2. Reduction to  $\mathrm{Bet}_{\{0\}}(x,x+c)$ formulas.}
Suppose now $\varphi(x)$ is an $N$-bounded $\fo_\KKK$ formula in which the unary functions are only applied to $x$.  
Let $c_0 < c_1 < \ldots < c_n$ be the constants in $\KKK$ (including $0$) corresponding to the unary functions that are applied to $x$.
Let $\varphi'(\overline{z})$ be the formula resulting from replacing each term $x+c_i$ with a new variable $z_i$.  Then $\varphi(x)$ is equivalent to $\exists \overline{z}. (z_0 < \cdots < z_n) \wedge  \varphi'(\overline{z}) \wedge \bigwedge (z_i = x+c_i)$.  Moreover, $\varphi'$ does not contain any unary functions and is thus a formula of $\fo_{\{0\}}$.  A standard model-theoretic argument (see~\cite{Kamp68,GPSS80,HR06}) shows that $(z_0 < \cdots < z_n) \wedge \varphi'(\overline{z})$ can be written as a finite disjunction of formulas of the form $\bigwedge_{i=0}^n \psi_i(z_i) \wedge \bigwedge_{i=0}^{n-1} \chi_i(z_i,z_{i+1})$ where each $\psi_i$ is a boolean combination of monadic predicates and each $\chi_i \in \mathrm{Bet}_{\{0\}}(z_i,z_{i+1})$.
Thus $\varphi(x)$ can be written as a finite disjunction of formulas of the form
\[\bigwedge_{i=0}^n \psi_i(x+c_i) \wedge \bigwedge_{i=0}^{n-1} \chi_i(x+c_i,x+c_{i+1}).\]
Now $\psi_i(x+c_i)$ is clearly expressible by the $\mtl_\KKK$ formula $\fDia_{\{c_i\}} \psi_i^\dag$, where $\psi_i^\dag$ is the obvious translation of $\psi_i(x)$ to $\mtl_\KKK$.  Likewise, if $\chi_i^\dag$ were an $\mtl_\KKK$ formula expressing $\chi_i(x,x+c_{i+1}-c_i)$ then $\fDia_{\{c_i\}} \chi_i^\dag$ would be an $\mtl_\KKK$ formula expressing $\chi_i(x+c_i,x+c_{i+1})$.  Thus we have reduced the problem of expressing $N$-bounded $\fo_\KKK$ formulas to expressing every formula in $\mathrm{Bet}_{\{0\}}(x,x+c)$.

\subsubsection*{Step 3. Expressive completeness for bounded formulas.}
Critical to this step is the following
definition and lemma from~\cite{GPSS80}.

A \emph{decomposition formula} $\delta(x,y)$ is any formula of the
form
\begin{align*}
x<y & \wedge \exists z_0 \ldots \exists z_n\,(x=z_0< \cdots < z_n=y) \\
    & \wedge \bigwedge\{ \varphi_i(z_i) : 0 < i < n\}\\
    & \wedge \bigwedge\{ \forall u\,((z_{i-1}<u<z_{i}) \rightarrow \psi_i(u)) :
0 < i \leq n\}
\end{align*}
where $\varphi_i$ and $\psi_i$ are \ltl\ formulas regarded as unary
predicates.

\begin{lemma}[\cite{GPSS80}]
Over any domain with a complete linear order, every formula $\psi(x,y)$ in
$\mathrm{Bet}_{\{0\}}(x,y)$ is equivalent to a boolean combination of
decomposition formulas $\delta(x,y)$.
\label{lem:gpss}
\end{lemma}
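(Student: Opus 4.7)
The plan is to prove the lemma by induction on the quantifier depth of $\psi(x,y) \in \mathrm{Bet}_{\{0\}}(x,y)$. In the base case (depth zero), the definitional requirement that every monadic atom $P(t)$ uses a bound variable forces $\psi$ to be a boolean combination of order atoms on $\{x,y\}$, each of which is either the trivial decomposition formula (with $n=1$ and empty annotations, reducing to $x<y$) or its negation.

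For the inductive step, it suffices to handle one layer of existential quantification, $\psi(x,y) = \exists z\,((x<z<y) \wedge \chi(x,y,z))$, where $\chi$ has strictly smaller quantifier depth and all its quantifiers remain relativized to $(x,y)$. The heart of the argument is a \emph{splitting} step: every bounded quantifier $\exists u\,((x<u<y) \wedge \theta(u))$ appearing inside $\chi$ is rewritten as
\[
  \exists u\,((x<u<z) \wedge \theta(u)) \ \vee\ \theta(z) \ \vee\ \exists u\,((z<u<y) \wedge \theta(u)),
\]
thereby confining every quantifier inside $\chi$ to one of the half-intervals $(x,z)$, $(z,y)$ or to the point $z$ itself. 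Propagating this rewriting recursively through $\chi$ and distributing over boolean connectives reorganises $\chi(x,y,z)$ into a boolean combination of formulas $\alpha(x,z) \in \mathrm{Bet}_{\{0\}}(x,z)$, boolean combinations of monadic predicates at $z$ (i.e.\ \ltl\ formulas regarded as unary predicates), and formulas $\gamma(z,y) \in \mathrm{Bet}_{\{0\}}(z,y)$. Applying the induction hypothesis to each such $\alpha$ and $\gamma$ replaces them by boolean combinations of decomposition formulas on $(x,z)$ and $(z,y)$ respectively.

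Pushing the outer $\exists z$ through a disjunctive-normal-form expansion then yields a disjunction of formulas of the shape $\exists z\,((x<z<y) \wedge C_L(x,z) \wedge \pi(z) \wedge C_R(z,y))$, where $C_L$ and $C_R$ are conjunctions of possibly negated decomposition formulas and $\pi$ is an \ltl\ annotation at $z$. After reducing each $C_L$ and $C_R$ to a disjunction of single decomposition formulas (see below), each disjunct becomes a decomposition formula $\delta(x,y)$ whose witness sequence is the concatenation of the sequences of $\delta_L$ (ending at $z$) and $\delta_R$ (starting at $z$), with $\pi$ attached as the \ltl\ annotation at the newly inserted witness $z$ and the remaining annotations inherited from $\delta_L$ and $\delta_R$.

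The main obstacle is precisely the bookkeeping needed to pass from ``boolean combinations of decomposition formulas'' to ``disjunctions of single decomposition formulas'' on a common interval. A conjunction of two decomposition formulas is handled by case-splitting on the possible interleavings of their witness sequences, producing a finite disjunction of merged decomposition formulas whose witness sequence is the common refinement and whose point- and interval-annotations are the pointwise conjunctions of the originals. Negations are handled by observing that, at any fixed bound on \ltl-depth of annotations and on the length of the witness sequence, the decomposition formulas partition signals into finitely many mutually exclusive classes, so $\neg \delta$ is expressible as a finite disjunction of decomposition formulas covering the complementary classes. Once this merging-and-complementation step is justified, the inductive synthesis above delivers the required boolean combination of decomposition formulas for $\psi(x,y)$.
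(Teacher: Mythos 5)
The paper does not actually prove this lemma---it is imported verbatim from Gabbay--Pnueli--Shelah--Stavi \cite{GPSS80}---so your attempt has to be judged on its own merits. The skeleton is sensible and close in spirit to standard composition arguments: induct on quantifier depth, split the interval at the newly quantified point $z$, reorganise the matrix into a boolean combination of $\mathrm{Bet}_{\{0\}}(x,z)$ formulas, a propositional condition at $z$, and $\mathrm{Bet}_{\{0\}}(z,y)$ formulas (this separation step is only asserted, but it is the same ``standard model-theoretic argument'' the paper itself invokes in Step~2, so that is forgivable), and recombine by concatenating witness sequences; your merging of a \emph{conjunction} of decomposition formulas by case-splitting on interleavings is also correct. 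The genuine gap is exactly the step you try to dispatch last: to push $\exists z$ through the output of the induction hypothesis you need every boolean combination of decomposition formulas (in the presence of $x<z<y$) to be a finite disjunction of \emph{single} decomposition formulas, and your justification for handling negation is wrong. Decomposition formulas are purely existential assertions; for any fixed bound on the number of witnesses and on the annotations they are neither mutually exclusive nor jointly exhaustive, so they do not ``partition signals into finitely many mutually exclusive classes'', and it does not follow that $\neg\delta$ is covered by complementary such classes. Showing that the negation of a decomposition formula is equivalent to a finite disjunction of decomposition formulas is not bookkeeping---it is essentially the hard core of the lemma, and since $\exists z$ commutes with neither $\neg$ nor $\wedge$, without it your induction does not close.

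A telling symptom is that your argument nowhere uses the hypothesis that the order is Dedekind complete, which the statement explicitly assumes and which the GPSS-style arguments exploit: turning the universal content of $\neg\delta$ into a positive (existential) assertion is precisely where one expects to need extremal split points obtained as suprema or infima of definable sets, which need not exist in an incomplete order. Either your proof needs such a completeness argument at the negation step, or you would be proving a strictly stronger statement than the one cited, which should itself raise suspicion. (A minor further quibble: quantifier-free formulas of $\mathrm{Bet}_{\{0\}}(x,y)$ include $y<x$, which is not a boolean combination of decomposition formulas $\delta(x,y)$, since all of these are false both when $y<x$ and when $y=x$; in the paper's application $x<y$ is given, but your base case as stated glosses over this.)
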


It follows that it suffices to show $\mtl_\KKK$ is able to express a decomposition formula.  
The proof of this result very closely follows the proof in~\cite{HOW13}, so we only outline the ideas 
and refer the reader to the appendix for the full details.

\begin{lemma}\label{lem:decomp}
Any decomposition formula $\delta(x,x+c)$ is equivalent to an $\mtl_\KKK$ formula.
\end{lemma}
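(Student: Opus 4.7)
The plan is to prove the lemma by induction on $n$, the number of subintervals in the decomposition, closely mirroring the argument of~\cite{HOW13} with density of $\KKK$ playing the role density of $\QQ$ plays there. The base case $n = 1$ is immediate: $\delta(x,x+c)$ asserts only that $\psi_1$ holds throughout $(x,x+c)$, which is directly expressed by $\fBox_{(0,c)}\psi_1^\dag$.

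For the inductive step, I would use density of $\KKK$ to choose a finite sequence of partition points $0 = e_0 < e_1 < \cdots < e_m = c$ in $\KKK$, and then express $\delta(x,x+c)$ as a finite disjunction over the possible patterns by which the intermediate points $z_1,\ldots,z_{n-1}$ can be distributed among the cells $(e_j,e_{j+1})$ and their boundaries $\{x+e_j\}$. Each disjunct is built as a conjunction of three kinds of building block: $\fDia_{\{e_j\}}\varphi_i^\dag$ pins each $z_i$ that coincides with a boundary, $\fBox_{(e_j,e_{j+1})}\psi_k^\dag$ handles each cell free of intermediate points, and a short bounded-Until fragment handles each cell containing exactly one intermediate point. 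Provided the chosen pattern places at most one intermediate point in each cell, every building block already lies in $\mtl_\KKK$, and the inductive hypothesis applies to the strictly smaller sub-decompositions that appear inside any single-point cell.

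The main obstacle will be the case of a cell carrying two or more intermediate points: the associated sub-decomposition may have up to $n$ subintervals, only on a strictly smaller $\KKK$-valued interval, so induction on $n$ alone does not close the argument. Following~\cite{HOW13}, I would address this by recursively applying the construction to any offending cell with a refined $\KKK$-partition (available by density of $\KKK$ within any subinterval), and rely on a careful combinatorial argument, parallel to the one in~\cite{HOW13}, to bound the required nesting in terms of the original $n$. Stitching the resulting MTL fragments across cell boundaries via $\fDia_{\{e_j\}}$ is then routine; the full details follow~\cite{HOW13} closely and are deferred to the appendix.
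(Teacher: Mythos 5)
There is a genuine gap at exactly the point you flag as the ``main obstacle,'' and your proposed fix does not close it. If the intermediate witnesses $z_1,\ldots,z_{n-1}$ all cluster inside one cell of your partition, refining that cell with a finer $\KKK$-partition cannot help: the witnesses are arbitrary real points, not $\KKK$-points, so they may lie within distance $\epsilon$ of one another for every $\epsilon>0$, and no fixed finite refinement forces at most one witness per cell. The recursion ``refine the offending cell and repeat'' therefore need not terminate, and there is no combinatorial bound of the kind you invoke; indeed this mischaracterizes~\cite{HOW13}, which never attempts to separate the witnesses into distinct cells.

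The paper's proof handles the clustered case \emph{without} induction, and this is the idea your sketch is missing. Fix $\nu\in\KKK$ with $0<\nu\le\frac{c}{2n}$ (density and non-triviality of $\KKK$ give such a $\nu$) and cut $(x,x+c)$ into cells of width $\nu$. If all of $z_1,\ldots,z_{n-1}$ lie in a single cell in the first half of $(x,x+c)$, one writes a direct nested formula $\until{(\varphi_1\wedge(\until{(\varphi_2\wedge\cdots)}{\psi_2}{(0,\nu)}))}{\psi_1}{[k\nu,(k+1)\nu)}$, ending in $\varphi_{n-1}\wedge\fBox_{(0,\nu)}\psi_n$, and conjoins the ``overlapping'' constraint $\fBox_{((k+1)\nu,c)}\psi_n$ to cover the tail up to $x+c$; if they lie in a single cell in the second half, the mirror construction uses nested $\since{}{}{(0,\nu)}$ operators anchored at $x+c$ via $\fDia_{\{c\}}$ together with $\fBox_{(0,k\nu)}\psi_1$. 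The induction hypothesis is applied only in the remaining case, where some grid point $x+k\nu$ separates the witnesses, so that $\delta$ splits into a $\mathrm{Bet}_{\{0\}}(x,x+k\nu)$ part and a $\mathrm{Bet}_{\{0\}}(x+k\nu,x+c)$ part, each with strictly fewer quantifiers, stitched together by $\fDia_{\{k\nu\}}$ (with a small case split on whether $z_{l+1}=x+k\nu$). Your base case and your treatment of the split case are fine, but without the direct nested-Until/Since construction for the single-cell case your induction does not go through.
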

\begin{proof}[Sketch]
The proof is by induction on $n$, the number of existential quantifiers in $\delta(x,x+c)$.
We divide the interval $(x,x+c)$
into small intervals of width $\nu \in \KKK$ where $0< \nu \leq \frac{c}{2n}$.  The fact that $\KKK$ is non-trivial and dense guarantees that $\nu$ exists.
We then consider three cases depending on where the witnesses for the 
existential quantifiers of $\delta$ lie (taking a disjunction to cover all cases).
If all witnesses lie in a single interval in the first half of $(x,x+c)$ then we can assert in $\mtl_\KKK$:
$\psi_1$ holds until some point in the interval, then subsequent witness points occur within $\nu$ time units of the previous one.
If instead all witnesses lie in a single interval in the second half of $(x,x+c)$ we assert:
In $c$ time units $\psi_n$ would have held since a point in the interval, and each witness point was preceded within $\nu$ time units by another.
Finally, if there is some $k$ such that $x+k\nu$ separates the witnesses, we divide $\delta(x,x+c)$ into a $\mathrm{Bet}_{\{0\}}(x,x+k\nu)$ formula and 
a $\mathrm{Bet}_{\{0\}}(x+k\nu,x+c)$ formula and apply the inductive hypothesis.
\end{proof}

\noindent Combining Kamp's Theorem and the results of this section yields:

\begin{lemma}\label{lem:bounded}
Let $\KKK$ be a dense additive subgroup of $\RR$.  Any $N$-bounded $\fo_\KKK$ formula with one free variable is equivalent to an $\mtl_\KKK$ formula.
\end{lemma}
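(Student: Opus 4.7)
The plan is to assemble the three steps sketched in the preceding subsection into a single translation procedure, with Kamp's theorem handling the base case where no metric constants occur.

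First I would take an $N$-bounded $\fo_\KKK$ formula $\varphi(x)$ and apply Lemma~\ref{lem:remove} to obtain an equivalent $N'$-bounded formula (over a possibly enriched vocabulary of monadic predicates $P^c$) in which the unary functions $+c$ are only applied to the free variable $x$. The enrichment is harmless for the translation to $\mtl_\KKK$ because each $P^c(y)$ will, in the final step, be rendered as $\fDia_{\{c\}} P$ (or the appropriate past-tense version, depending on the sign of $c$), and the monadic-predicate structure is otherwise unchanged.

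Next I would apply Step~2 verbatim: let $c_0 < c_1 < \cdots < c_n$ enumerate the constants applied to $x$ (including $0$), introduce fresh variables $z_i$ for each $x+c_i$, and use the standard composition lemma to rewrite $(z_0<\cdots<z_n)\wedge \varphi'(\overline{z})$ as a finite disjunction of conjunctions of the form
\[
\bigwedge_{i=0}^n \psi_i(z_i) \wedge \bigwedge_{i=0}^{n-1} \chi_i(z_i,z_{i+1}),
\]
with each $\psi_i$ a boolean combination of monadic predicates and each $\chi_i \in \mathrm{Bet}_{\{0\}}(z_i,z_{i+1})$. Substituting back $z_i = x+c_i$ expresses $\varphi(x)$ as a finite disjunction of conjunctions of formulas $\psi_i(x+c_i)$ and $\chi_i(x+c_i,x+c_{i+1})$.

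For the translation itself, each $\psi_i(x+c_i)$ is immediately equivalent to the $\mtl_\KKK$ formula $\fDia_{\{c_i\}}\psi_i^\dag$ (using a past operator if $c_i<0$), where $\psi_i^\dag$ is the obvious propositional translation. For each conjunct $\chi_i(x+c_i,x+c_{i+1})$, let $c = c_{i+1}-c_i \in \KKK_{>0}$; viewing $\chi_i$ as a $\mathrm{Bet}_{\{0\}}(x,x+c)$ formula, I would apply Lemma~\ref{lem:gpss} to write it as a boolean combination of decomposition formulas $\delta(x,x+c)$, and then invoke Lemma~\ref{lem:decomp} (whose hypothesis requires $\KKK$ to be dense, which we have) to replace each $\delta(x,x+c)$ by an equivalent $\mtl_\KKK$ formula $\delta^\dag$. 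Prefixing with $\fDia_{\{c_i\}}$ shifts the reference point from $x$ to $x+c_i$, yielding the desired translation of $\chi_i(x+c_i,x+c_{i+1})$. Closing under the boolean connectives of the outer disjunction of conjunctions gives the required $\mtl_\KKK$ formula equivalent to $\varphi(x)$.

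The main obstacle was already discharged in the preceding lemmas: Lemma~\ref{lem:HIF2} and Lemma~\ref{lem:remove} to strip unary functions off bound variables (this is where the new normal form HIF does work that was not needed in the $\QQ$-scaling argument of~\cite{HOW13}), and Lemma~\ref{lem:decomp} to produce the $\mtl_\KKK$ translation of a decomposition formula (this is where denseness of $\KKK$ is essential, since we need arbitrarily small $\nu \in \KKK$). Given these, the present lemma is just a bookkeeping combination: the composition argument, a shift by $\fDia_{\{c_i\}}$, and closure under boolean connectives.
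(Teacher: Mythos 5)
Your proposal is correct and follows essentially the same route as the paper, whose proof of this lemma is precisely the combination of Lemma~\ref{lem:remove} (via the HIF normal form), the Step~2 composition argument, Lemma~\ref{lem:gpss}, Lemma~\ref{lem:decomp}, the $\fDia_{\{c_i\}}$ shifts, and the replacement of the auxiliary predicates $P^c$ by $\fDia_{\{c\}}P$. The only points you add beyond the paper's sketch (past operators for negative $c_i$, closure under the boolean connectives) are exactly the bookkeeping the paper leaves implicit.
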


\subsection{Syntactic separation of $\mtl_\KKK$}
Having established that $\mtl_\KKK$ can express $N$-bounded $\fo_\KKK$ formulas when $\KKK$ is dense
we now turn to extending the result to all $\fo_\KKK$.  Our results for this section hold for all non-trivial
additive subgroups $\KKK$.

The notion of \emph{separation} was introduced by Gabbay in~\cite{G81} 
where he showed that every $\ltl$ formula can be equivalently
rewritten as a boolean combination of formulas, each of which depends
only on the past, present or future. 
Hunter, Ouaknine and Worrell~\cite{HOW13} extended this idea for the 
metric setting, showing that each $\mtl_\QQ$ formula can be
equivalently rewritten as a boolean combination of formulas, each
of which depends only on the distant past, bounded present, or distant future. 

Here we use a similar approach, however we need to refine the definition
of distant past and distant future in order to use the separation property in Section~\ref{sec:combining}.
This refinement is, however, simple enough that the proof of separability of $\mtl_\QQ$ in~\cite{HOW13}
can largely be used and we need only indicate the two places where adjustments need to be made to account for our more general setting.

Recall from~\cite{HOW13} the inductive definitions of \emph{future-reach} 
$\mathit{fr}:\mtl_\KKK \to \KKK\cup\{\infty\}$ and \emph{past-reach} 
$\mathit{pr}:\mtl_\KKK \to \KKK\cup\{\infty\}$
\begin{itemize}
\item $\mathit{fr}(p) = \mathit{pr}(p)=0$ for all propositions $p$,
\item $\mathit{fr}(\true) = \mathit{pr}(\true)= 0$,
\item $\mathit{fr}(\neg \varphi) = \mathit{fr}(\varphi)$, $\mathit{pr}(\neg \varphi) = \mathit{pr}(\varphi)$,
\item $\mathit{fr}( \varphi \wedge \psi) = \max\{\mathit{fr}(\varphi),\mathit{fr}(\psi)\}$,
\item $\mathit{pr}( \varphi \wedge \psi) = \max\{\mathit{pr}(\varphi),\mathit{pr}(\psi)\}$,
\item If $n=\inf(I)$ and $m=\sup(I)$:
\begin{itemize}
\item $\mathit{fr}(\until{\psi}{\varphi}{I}) = m + \max\{\mathit{fr}(\varphi),\mathit{fr}(\psi)\}$,
\item $\mathit{pr}(\since{\psi}{\varphi}{I}) = m + \max\{\mathit{pr}(\varphi),\mathit{pr}(\psi)\}$,
\item $\mathit{fr}(\since{\psi}{\varphi}{I}) = \max\{\mathit{fr}(\varphi),\mathit{fr}(\psi)-n\}$,
\item $\mathit{pr}(\until{\psi}{\varphi}{I}) = \max\{\mathit{pr}(\varphi),\mathit{pr}(\psi)-n\}$.
\end{itemize}
\end{itemize}

\noindent Our separation result is then:

\begin{lemma}\label{lem:sep}
Let $\KKK$ be a non-trivial additive subgroup of $\RR$.  For any $c \in \KKK_{\geq 0}$, every $\mtl_\KKK$ formula
is equivalent to a boolean combination of:
\begin{itemize}
\item $\fDia_{\{N\}}  \varphi$ where $\mathit{pr}(\varphi)<N-c$,
\item $\pDia_{\{N\}} \varphi$ where $\mathit{fr}(\varphi)< N-c$, and
\item $\varphi$ where all intervals occurring in the temporal operators are bounded.
\end{itemize}
\end{lemma}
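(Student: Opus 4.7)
I would adapt the inductive proof of syntactic separation for $\mtl_\QQ$ in~\cite{HOW13}, indicating only the two places where the argument requires modification for the more general setting of a non-trivial additive subgroup $\KKK$ and the refined distant-past/future notion.

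The proof proceeds by structural induction on the input $\mtl_\KKK$ formula $\psi$. The base cases (atomic propositions and $\true$) have $\mathit{fr}=\mathit{pr}=0$ and so qualify as bounded. Boolean combinations follow by applying the induction hypothesis to the immediate subformulas and using the closure of the target class under negation and conjunction. The substantive case is a temporal operator, say $\chi = \until{\psi_2}{\psi_1}{I}$, with $\since{}{}{}$ being dual. I would first apply the induction hypothesis to $\psi_1$ and $\psi_2$ to obtain separated representations, then follow~\cite{HOW13} in rewriting $\chi$ by case analysis on the location of the witnessing point: either it lies within a bounded horizon of the present, reducing that disjunct of $\chi$ to a bounded formula, or it lies beyond a chosen threshold $N \in \KKK$, in which case it is captured by a distant-future formula of the form $\fDia_{\{N\}}(\cdots)$. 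When the separated forms of $\psi_1,\psi_2$ themselves contain distant-future or distant-past constituents, these must be pushed outward past the outer Until; this is the combinatorial heart of~\cite{HOW13}, which we reuse.

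The two adjustments are as follows. First, wherever~\cite{HOW13} selects a rational threshold to delineate the bounded horizon from the distant part, we instead select one in $\KKK$; since $\KKK$ is a non-trivial additive subgroup it contains positive elements of arbitrary magnitude, so such a threshold always exists. Second, the refined conditions $\mathit{pr}(\varphi) < N-c$ and $\mathit{fr}(\varphi) < N-c$ are strictly stronger than the conditions $\mathit{pr}(\varphi) < N$ and $\mathit{fr}(\varphi) < N$ of~\cite{HOW13}; to satisfy them we simply pick $N \in \KKK$ exceeding the relevant reach by more than $c$, once again using that $\KKK$ is unbounded above.

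The main obstacle, as in~\cite{HOW13}, is the parameter bookkeeping in the inductive step for the temporal operators: each time a distant constituent is pushed past an outer Until or Since, the induced changes to $\mathit{fr}$ and $\mathit{pr}$ and the requirement that $N$ continue to respect the $c$-slack must be tracked carefully through nested modalities, and the resulting formula must be verified equivalent under the pointwise semantics. With the two adjustments above, the arguments of~\cite{HOW13} otherwise go through essentially verbatim, which is why only the two indicated modifications need to be spelled out.
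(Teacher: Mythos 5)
Your overall strategy is the same as the paper's: defer to the proof of separation for $\mtl_\QQ$ in~\cite{HOW13} and spell out only the places where that proof actually uses properties of the constant set. You have, however, identified those places only partially. Your two ``adjustments'' are really one and the same adjustment, namely the choice of the separation threshold: picking $N \in \KKK$ (rather than a rational) with slack $c$, i.e.\ $N > \mathit{pr}(\theta) + c$, which exists because a non-trivial additive subgroup contains arbitrarily large positive elements. That is indeed one of the two modifications the paper makes. The modification you miss is elsewhere: in the normal-form step and again after applying Gabbay's \ltl\ separation theorem~\cite{G81}, the proof of~\cite{HOW13} expresses the neighbourhood operators $K^+$ and $K^-$ as \emph{bounded} $\mtl$ formulas, via $K^+(\varphi) \leftrightarrow \neg(\until{\true}{\neg\varphi}{<\nu})$ and $K^-(\varphi) \leftrightarrow \neg(\since{\true}{\neg\varphi}{<\nu})$ for some positive constant $\nu$; for general $\KKK$ this requires some $\nu \in \KKK$ with $\nu > 0$, which again exists by non-triviality but is a distinct dependence on $\KKK$ from the choice of $N$ (here one needs \emph{some} positive element, not a large one). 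Since your proposal explicitly asserts that apart from your two indicated points the arguments of~\cite{HOW13} go through verbatim, this omission leaves a small hole in the claim, though it is repaired by the same observation that $\KKK \neq \{0\}$.

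A secondary remark: your sketch of the engine of~\cite{HOW13} (structural induction pushing distant constituents past an outer Until, with a case split on the witnessing point) is a loose paraphrase; the actual argument proceeds via a normal form, rewrite rules that extract unbounded operators from the scope of bounded ones (with a termination measure on ``unbounding depth''), Gabbay's \ltl\ separation applied to the unbounded skeleton with bounded subformulas treated as atoms, and finally an unravelling equivalence $\until{\psi}{\varphi}{} \leftrightarrow \until{\psi}{\varphi}{<N} \vee \big(\fBox_{<N}\varphi \wedge \fDia_{=N}(\psi \vee (\varphi \wedge \until{\psi}{\varphi}{}))\big)$ at which the threshold $N$ enters. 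Since both you and the paper delegate these details to~\cite{HOW13}, this imprecision is not itself a gap, but it is worth knowing where in that structure the two $\KKK$-dependent choices actually occur.
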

\begin{proof}
The proof follows directly from the proof of the separability of $\mtl_\QQ$ in~\cite{HOW13} as only few assumptions were made about the underlying set of constants, which we now address. 

\begin{itemize}
\item For the equivalence defining $K^+$ and $K^-$ as bounded formulas, we instead need to use:
$K^+(\varphi) \: \leftrightarrow\: \neg (\until{\true}{\neg \varphi}{<\nu})$ and  $K^-(\varphi) \: \leftrightarrow\: \neg (\since{\true}{\neg \varphi}{<\nu})$,  
where $\nu \in \KKK$ is such that $\nu>0$.  Note that as $\KKK$ is non-trivial such a $\nu$ exists.

\item In Step 3 (\textit{Completing the separation}) $N$ was chosen so that $N > \mathit{pr}(\theta)+1$.  Now we choose $N\in \KKK$ such that $N > \mathit{pr}(\theta)+c$.  Note that again as $\KKK$ is non-trivial such a choice is always possible.
\end{itemize}
\end{proof}

\subsection{Expressive completeness for $\fo_\KKK$}\label{sec:combining}
We now use Lemmas~\ref{lem:bounded} and~\ref{lem:sep} to complete the proof of Theorem~\ref{thm:main1}.  
Let $\varphi(x)$ be a $\fo_\KKK$ formula.  We prove by induction on the quantifier depth of $\varphi(x)$ that it is equivalent to 
an $\mtl_\KKK$ formula.
\vspace*{-2ex}
\paragraph{Base case.}  
All atoms are of the form $P_i(x)$, $x=x$, $x<x$, $x+c=x$.  We replace
these by $P_i$, $\true$, $\false$, $\false$ respectively and obtain an
$\mtl_\KKK$ formula which is clearly equivalent to $\varphi$.
\vspace*{-2ex}
\paragraph{Inductive case.}  
Without loss of generality we may assume $\varphi = \exists
y. \psi(x,y)$.  We would
like to remove $x$ from $\psi$.  To this end we take a disjunction
over all possible choices for $\gamma:\{P_1(x), \ldots
P_m(x)\}\to\{\true,\false\}$, and use $\gamma$ to determine the value
of $P_i(x)$ in each disjunct via the formula $\theta_\gamma :=
\bigwedge_{i=1}^m (P_i(x) \leftrightarrow \gamma(P_i))$.  Thus we can
equivalently write $\varphi$ in the form
$\bigvee_\gamma \big(\theta_\gamma(x) \wedge \exists y. \psi_\gamma(x,y)\big) \, ,$
where the propositions $P_i(x)$ do not appear in the $\psi_\gamma$.

Now in each $\psi_\gamma$, we may assume, after some arithmetic, $x$ appears only in atoms of the form
$x=z$, $x<z$, $x>z$ and $x+c=z$ for some variable $z$.  We next
introduce new monadic propositions $P_=$, $P_<$, $P_>$, and
$F_c$ for all $c$ such that there is an atom $x+c=z$, and replace each of the atoms containing $x$ in $\psi_\gamma$
with the corresponding proposition.  That is, $x=z$ becomes $P_=(z)$,
$x<z$ becomes $P_<(z)$ and so on.  This yields a formula
$\psi'_\gamma(y)$ in which $x$ does not occur, such that
$\psi'_\gamma(y)$ has the same truth value as $\psi_\gamma(x,y)$ 
for suitable interpretations of the new propositions.

By the induction hypothesis, for each $\gamma$ there is an $\mtl_\KKK$
formula $\theta_\gamma^\dag$ equivalent to $\theta_\gamma(x)$, and an
$\mtl_\KKK$ formula $\psi^\dag_\gamma$ equivalent to
$\psi'_\gamma(y)$.  Then our
original formula $\varphi$ has the same truth value at each point $x$ as
\( \varphi' := \bigvee_\gamma \big(\theta_\gamma^\dag \wedge 
(\pDia \psi_\gamma^{\dagger} \vee \psi_\gamma^{\dagger} \vee \fDia 
\psi_\gamma^{\dagger})\big)\)
for suitable interpretations of $P_=$, $P_<$, $P_>$ and the $F_c$.  

Let $c_{\max} \in \KKK$ be the largest, in absolute value, element of $\KKK$ for which the 
propositional variable $F_c$ was introduced.
By Lemma~\ref{lem:sep}, $\varphi'$ is equivalent to a boolean
combination of formulas
\begin{enumerate}[(I) ]
\item $\fDia_{\{N\}}  \theta$ where $\mathit{pr}(\theta)<N-|c_{\max}|$,
\item $\pDia_{\{N\}} \theta$ where $\mathit{fr}(\theta)< N-|c_{\max}|$, and
\item $\theta$ where all intervals occurring in the temporal operators are bounded.
\end{enumerate}
Now in formulas of type (I) above, we know the intended value of each
of the propositional variables $P_=, P_<, P_>$ and $F_c$: they are all
$\false$ except $P_<$, which is $\true$.  So we can replace these
propositional atoms by $\true$ and $\false$ as appropriate and obtain
an equivalent $\mtl_\KKK$ formula which does not mention the new
variables.  Likewise we know the value of each of propositional
variables in formulas of type (II): all are $\false$ except $P_>$,
which is $\true$; so we can again obtain an equivalent $\mtl_\KKK$
formula which does not mention the new variables.  It remains to deal
with each of the bounded formulas, $\theta$.  As $\mtl_\KKK$ is definable in $\fo_\KKK$, 
there exists a formula $\theta^\ast(x)\in \fo_\KKK$, with predicates from $\{P_=, P_<, P_>, F_c\}$,
equivalent to $\theta$.  It is clear that as
$\theta$ is bounded, there is an $N$ such that $\theta^\ast$ is
$N$-bounded.  We now unsubstitute each of the introduced propositional
variables.  That is, replace in $\theta^\ast(x)$ all occurrences of
$P_=(z)$ with $z=x$, all occurrences of $P_<(z)$ with $x<z$ etc.  The
result is an equivalent formula $\theta^+(x) \in \fo_\KKK$, which is still
$N$-bounded as we have not removed any constraints on the variables of
$\theta^\ast$.  From Lemma~\ref{lem:bounded}, it follows that there
exists an $\mtl_\KKK$ formula $\delta$ that is equivalent to
$\theta^+$, i.e., equivalent to $\theta$.

\section{Expressive completeness of $\mtl_\ZZ$ with counting}\label{sec:count}
In this section we show
\begin{theoremNo}{\ref{thm:main2}}
$\mtl_\ZZ\textrm{+C}$ has the same expressive power as $\fo_\ZZ$.
\end{theoremNo}

In fact we show a slightly stronger result involving an extension of Q2MLO (see~\cite{HR06}) by punctuality quantifiers.

\begin{definition}
\emph{Q2MLO with punctuality (PQ2MLO)} is an extension of $\fo_{\{0\}}$ (and a restriction of $\fo_{\{1\}}$) defined by the following syntax:
\[\varphi ::= \true \mid P_i(x) \mid x < y \mid \varphi \wedge
\varphi \mid \neg \varphi \mid \exists x \, \varphi \mid \exists_{x}^{x+1} y\, \psi \mid   \exists_{x-1}^{x} y\, \psi\mid   \fDia_{1}^x y.\, \chi\mid   \pDia_{1}^x y.\, \chi\, ,
\]
where $x$ and $y$ denote variables, $\psi$ denotes a PQ2MLO formula with two free variables $x$ and $y$, and $\chi$ denotes a PQ2MLO formula with one free variable, $y$.
\emph{Q2MLO} is the restriction of PQ2MLO to formulas that do not contain the punctual quantifiers $\fDia_{1}^x$ and $\pDia_{1}^x$.
\end{definition}
The quantifiers $\exists_{x}^{x+1} y$, $\exists_{x-1}^{x} y$, $\fDia_{1}^x y$ and $\pDia_{1}^x y$ are interpreted as $\exists y \in (x,x+1)$, $\exists y \in (x-1,x)$, $\exists y.\, (y = x+1)$ and $\exists y.\, (y = x-1)$ respectively.  

\begin{theorem}\label{thm:main2plus} $\fo_\ZZ$, PQ2MLO and $\mtl_\ZZ\textrm{+C}$ all have the same expressive power.
\end{theorem}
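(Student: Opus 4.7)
The plan is to prove the three inclusions
\[
\mtl_\ZZ\textrm{+C} \subseteq \fo_\ZZ, \qquad \text{PQ2MLO} \subseteq \mtl_\ZZ\textrm{+C}, \qquad \fo_\ZZ \subseteq \text{PQ2MLO}.
\]
The first is immediate: every operator of $\mtl_\ZZ\textrm{+C}$, including $\Fcount{n}$ and $\Pcount{n}$, has a direct first-order definition over $(\RR,<,+1)$. For the second, the guarded quantifiers $\exists_x^{x+1}y$ and $\exists_{x-1}^{x}y$ of the Q2MLO fragment embed into MITL with counting by Hirshfeld and Rabinovich's result \cite{HR04}, while the punctuality quantifiers $\fDia_{1}^x y$ and $\pDia_{1}^x y$ translate directly into $\fDia_{\{1\}}$ and $\pDia_{\{1\}}$ in $\mtl_\ZZ\textrm{+C}$; a straightforward structural induction on PQ2MLO completes the translation.

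The heart of the argument is the third inclusion. I would mirror the three-step strategy of Section~\ref{sec:char}. First, apply the syntactic separation of Lemma~\ref{lem:sep}, which holds for the non-trivial subgroup $\KKK = \ZZ$, to reduce an arbitrary $\fo_\ZZ$ formula to a boolean combination of $N$-bounded formulas together with formulas $\fDia_{\{N\}}\theta$ and $\pDia_{\{N\}}\theta$; the latter two forms are captured in PQ2MLO by iterating the punctuality quantifiers $\fDia_{1}^x$ and $\pDia_{1}^x$. Second, rewrite each bounded formula in $\mathrm{Bet}_{\{0\}}(x,x+c)$ form via the HIF normal form (Lemma~\ref{lem:HIF2}) and the removal of unary functions from bound variables (Lemma~\ref{lem:remove}); neither argument used density, so both go through verbatim for $\KKK = \ZZ$. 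Third, express each formula of $\mathrm{Bet}_{\{0\}}(x,x+c)$ in PQ2MLO.

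For this third step the dense-case proof of Lemma~\ref{lem:bounded} is unavailable. Instead I would split $(x,x+c)$ into the $c$ unit subintervals $(x+k,x+k+1)$, access the intermediate integer points by chaining the punctuality quantifier $\fDia_{1}^x$, and handle each unit piece by appealing to Hirshfeld and Rabinovich's expressive completeness of Q2MLO for unit-bounded $\fo_{\{1\}}$ formulas \cite{HR04,HR06}. A decomposition-formula style construction in the spirit of Lemma~\ref{lem:decomp} then glues the unit pieces together into a single PQ2MLO formula. The main obstacle is maintaining the PQ2MLO guarding discipline during this gluing: naively introducing a free witness variable for every intermediate integer-offset point would violate the syntax of PQ2MLO, so the inductive construction must be organized so that each intermediate point is reached by an iterated punctuality quantifier rather than by a fresh free variable, and the Q2MLO sub-translations on neighbouring unit intervals are nested inside the corresponding $\fDia_{1}^x$ scopes.
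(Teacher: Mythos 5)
Your first two inclusions are fine and essentially match the paper ($\mtl_\ZZ\textrm{+C}\subseteq\fo_\ZZ$ is immediate, and the equivalence of PQ2MLO with $\mtl_\ZZ\textrm{+C}$ follows from Theorem~\ref{thm:Q2MLO} plus induction on the nesting of the punctual operators), and the outer scaffolding of the third inclusion (separation via Lemma~\ref{lem:sep}, HIF normal form, removal of unary functions, reduction to unit windows reached by chained punctual quantifiers) is also in line with the paper. The gap is in the step you delegate to the literature: there is no ``expressive completeness of Q2MLO for unit-bounded $\fo_{\{1\}}$ formulas'' in \cite{HR04,HR06}. What those papers prove, and what this paper uses, is only that Q2MLO and MITL with counting are expressively equivalent; that theorem says nothing about which $\fo_{\{1\}}$ formulas Q2MLO captures. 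Whether a formula of $\mathrm{Bet}_{\{0\}}(x,x+1)$, anchored at $x$ so that its window has the punctually determined right endpoint $x+1$, can be expressed in (P)Q2MLO is precisely the crux of Theorem~\ref{thm:main2plus}, and it cannot be taken off the shelf.

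Moreover, the obstacle you flag at the end is exactly where your plan breaks down rather than a technicality to be organized away: the body of $\exists_x^{x+1}y$ may have only $x$ and $y$ free, so once an interior witness $z\in(x,x+1)$ is introduced, any further guarded quantifier must be anchored either at $x$ (losing access to $z$) or at $z$ (losing access to $x+1$, since $(z,z+1)$ overshoots the window). Hence a decomposition formula with both endpoints anchored cannot be translated piecewise, and iterating $\fDia_1^x$ only handles the easy part, namely reaching the integer offsets (the analogue of the $\fDia_{\{c_i\}}$ shifts in Step~2 of Section~\ref{sec:char}). The paper's proof supplies exactly the missing ingredient: by Lemma~\ref{lem:gpss} it suffices to express a unit decomposition formula $\delta(x,x+1)$, and this is done by the analogue of Lemma~\ref{lem:decomp}, i.e.\ the PQ2MLO formula $\delta'(x)=\forall_x^{x+1}u.\bigvee_{i=1}^{2n-1}\delta_i(x,u)\;\wedge\;\fDia_1^x y.\exists_{y-1}^{y}u.\delta(u,y)$, each conjunct of which respects the two-free-variable guarding discipline, together with a genuinely non-trivial correctness argument that takes a maximal prefix $\delta_r(x,u)$ holding for $u$ arbitrarily close to $x+1$ and merges its witnesses with those of $\delta(u,x+1)$ supplied by the second conjunct. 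Nothing in your sketch produces this construction or any substitute for it, so as written the inclusion $\fo_\ZZ\subseteq\textrm{PQ2MLO}$ is unproven.
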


It is clear that $\fo_\ZZ$ is at least as expressive as the other two.  To show the equivalence of PQ2MLO and $\mtl_\ZZ\textrm{+C}$ we use the following result of~\cite{HR06}.
\begin{theorem}[\cite{HR06}]\label{thm:Q2MLO}
MITL with counting has the same expressive power as Q2MLO.
\end{theorem}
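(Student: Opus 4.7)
The plan is to prove both inclusions separately, following a strategy analogous to Section~\ref{sec:char} specialized to the non-dense setting $\KKK = \ZZ$.

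\textbf{Easy direction: MITL with counting $\subseteq$ Q2MLO.} This is a direct syntactic translation. Each bounded non-punctual MITL modality $\until{\psi}{\varphi}{(a,b)}$ with $a, b \in \ZZ$ is encoded in Q2MLO by iterating $\exists_{x}^{x+1}$ (and $\exists_{x-1}^{x}$ for $\since{}{}{}$) to walk from the anchor $x$ into the sub-interval $(x+a, x+b)$, then enforcing the until condition with further nested bounded quantifiers. Unbounded modalities use a top-level unbounded $\exists$ together with the bounded-quantifier translation on the tail. The counting operator $\Fcount{n}\varphi$ translates into the existence of $n$ strictly ordered witnesses in $(x, x+1)$ each satisfying (the translation of) $\varphi$, expressed by $n$ nested $\exists_{x}^{x+1}$ quantifiers with order and property constraints.

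\textbf{Hard direction: Q2MLO $\subseteq$ MITL with counting.} Proceed by induction on formula structure, reducing to the nontrivial case $\exists_{x}^{x+1} y.\, \psi(x, y)$ (past case symmetric). After applying the inductive hypothesis to Q2MLO subformulas of $\psi$ anchored at $y$, we may assume $\psi(x, y)$ is a first-order formula in $\mathrm{Bet}_{\{0\}}(x, x+1)$ whose atomic subformulas involve auxiliary monadic propositions encoding MITL-with-counting formulas evaluated at $y$. A composition-style argument parallel to Step~2 of Section~\ref{sec:char} then reduces the task to expressing in MITL with counting (anchored at $x$) any formula of the shape $\exists y \in (x, x+1).\, \chi(x, y)$ with $\chi \in \mathrm{Bet}_{\{0\}}(x, x+1)$. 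By Lemma~\ref{lem:gpss} it suffices to handle a single decomposition formula $\delta(x, x+1)$.

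\textbf{Main obstacle.} The central difficulty, and the reason counting is needed, is that with $\KKK = \ZZ$ the interval $(x, x+1)$ admits no sub-unit refinement, so the inductive argument of Lemma~\ref{lem:decomp}, which selects $\nu \in \KKK$ arbitrarily small, collapses. The counting modalities take over the role of fine subdivision: a decomposition asserting the existence of ordered witnesses $z_1 < \cdots < z_n$ in $(x, x+1)$, each carrying an LTL-expressible state property $\varphi_i$, with LTL constraints $\psi_i$ holding on the open sub-intervals $(z_{i-1}, z_i)$, is expressed by introducing fresh monadic predicates marking the intended witness pattern, asserting the total multiplicity via $\Fcount{n}$ over the disjunction of these predicates, and using nested bounded MITL operators to enforce the ordering and propagate the inter-witness LTL constraints. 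A boolean combination of such encodings captures an arbitrary $\delta(x, x+1)$, closing the induction and completing the equivalence.
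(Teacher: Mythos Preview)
The paper does not prove this theorem: it is quoted verbatim from~\cite{HR06} and used as a black box in the derivation of Theorem~\ref{thm:main2plus}. There is therefore no in-paper argument to compare your attempt against.

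On its own merits, your outline has the right overall shape (translate the metric quantifier case to a $\mathrm{Bet}_{\{0\}}(x,x+1)$ problem, invoke Lemma~\ref{lem:gpss}, then express decomposition formulas), but the decisive step is a genuine gap. You write that a decomposition formula is ``expressed by introducing fresh monadic predicates marking the intended witness pattern, asserting the total multiplicity via $\Fcount{n}$ over the disjunction of these predicates, and using nested bounded MITL operators to enforce the ordering and propagate the inter-witness LTL constraints''. This is not a construction. One cannot introduce fresh uninterpreted predicates in an expressive-equivalence argument: the translated formula must live over the same signature, so any ``marker'' must itself be an MITL+C-definable formula, and you have not said what those formulas would be. More importantly, the difficulty you correctly identify---that with $\KKK=\ZZ$ there is no sub-unit constant and hence no way to anchor the right endpoint $x+1$ from an interior witness---is not resolved by your sketch. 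Already for the two-piece case $\exists z_1\in(x,x+1).\,\varphi_1(z_1)\wedge(\psi_1\text{ on }(x,z_1))\wedge(\psi_2\text{ on }(z_1,x+1))$ one must say ``$\psi_2$ holds on $(z_1,x+1)$'' with no punctual access to $x+1$; a bare count of ``transition'' points together with unspecified ``nested bounded MITL operators'' does not do this. The actual argument in~\cite{HR06} is considerably more delicate than what you have indicated.

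A secondary issue: in the hard direction you assume that after replacing inner metric subformulas by propositions, $\psi(x,y)$ lies in $\mathrm{Bet}_{\{0\}}(x,x+1)$. But Q2MLO freely mixes \emph{unbounded} first-order quantifiers with the bounded metric ones, and $\psi$ may refer to both $x$ and $y$ in such unbounded contexts; some separation or Kamp-style preprocessing is required before Lemma~\ref{lem:gpss} becomes applicable, and this step is missing from your sketch.
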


We also observe that if $\varphi(y)$ is a formula of PQ2MLO that is equivalent to $\varphi' \in \mtl_\ZZ\textrm{+C}$ then $\fDia_{1}^x y \varphi(y) $ is equivalent to $\fDia_{\{1\}} \varphi'$ and $\pDia_{1}^x y \varphi(y) $ is equivalent to $\pDia_{\{1\}} \varphi'$.  The result then follows by induction on the nesting depth of the punctual operators ($\fDia_{1}^x y$ / $\pDia_{1}^x y$ and $\fDia_{\{1\}}$ / $\pDia_{\{1\}}$) and Theorem~\ref{thm:Q2MLO}.

It remains to show any formula in $\fo_\ZZ$ has an equivalent PQ2MLO formula.  From the proof of Theorem~\ref{thm:main1} in the previous section, it is sufficient to derive an analogue of Lemma~\ref{lem:decomp} for $\fo_{\{1\}}$.  That is, we need only consider $\fo_{\{1\}}$ formulas of the form $\delta(x) = \delta(x,x+1)$ where:
\begin{align*}
\delta(x,y) &=  \exists z_0 \ldots \exists z_n\,(x=z_0< \cdots < z_n=y) \\
    & \quad \wedge \bigwedge\{ \varphi_i(z_i) : 0 < i < n\}\\
    & \quad \wedge \bigwedge\{ \forall u\,((z_{i-1}<u<z_{i}) \rightarrow \psi_i(u)) :
0 < i \leq n\}.
\end{align*}

\noindent For $1 \leq j \leq 2n-1$ let
\begin{align*}
\delta_{j}(x,y) &= \exists z_0 \ldots \exists z_k\,(x=z_0< \cdots < z_k=y) \\
    & \quad \wedge \bigwedge\{ \varphi_i(z_i) : 0 < i \leq \lfloor \frac{j}{2} \rfloor\}\\
    & \quad \wedge \bigwedge\{ \forall u\,((z_{i-1}<u<z_{i}) \rightarrow \psi_i(u)) :
0 < i \leq k\},
\end{align*}
\noindent where $k = \lceil \frac{j}{2} \rceil$.
That is, $\delta_j(x,y)$ is the formula obtained by restricting $\delta(x)$ to the first $j$ formulas of $\psi_1, \varphi_1, \psi_2, \varphi_2, \ldots, \psi_n$.  Now consider the PQ2MLO formula:
 \[\delta'(x)\quad =\quad \forall_x^{x+1}\ u .\bigvee_{i=1}^{2n-1} \delta_i(x,u)\: \wedge \:  \fDia^x_1 y .\exists_{y-1}^y u .\delta(u,y).
\]
 

\noindent The following result completes the proof of Theorem~\ref{thm:main2plus} and hence Theorem~\ref{thm:main2}.
\begin{lemma} 
$\delta(x)$ is equivalent to $\delta'(x)$.
\end{lemma}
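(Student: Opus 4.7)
The forward direction is routine: given witnesses $x = z_0 < z_1 < \cdots < z_n = x+1$ for $\delta(x)$, each $u \in (x, x+1)$ either coincides with some $z_i$ (witnessing $\delta_{2i}(x, u)$ via $z_0, \ldots, z_i$) or lies in an open subinterval $(z_{i-1}, z_i)$ (witnessing $\delta_{2i-1}(x, u)$ via $z_0, \ldots, z_{i-1}, u$); and any $u^* \in (x, z_1)$ together with $z_1, \ldots, z_n$ witnesses $\delta(u^*, x+1)$.

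For the backward direction I will induct on $n$. The case $n = 1$ is immediate since $\delta_1 = \delta$ and unioning the $\psi_1$-intervals supplied by the universal conjunct gives $\psi_1$ on $(x, x+1)$. For $n \ge 2$, extract witnesses $u^* < z_1^* < \cdots < z_n^* = x+1$ for $\delta(u^*, x+1)$ from the existential conjunct. Attempting $z_0 = x$ and $z_i = z_i^*$ for $i \ge 1$ satisfies all required constraints except possibly $\psi_1$ on $(x, z_1^*)$; since $\psi_1$ already holds on $(u^*, z_1^*)$, the task reduces to establishing $\psi_1$ on $(x, u^*]$.

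To control the universal conjunct I would use the monotonicity observation that $j^*(u) := \min\{j : \delta_j(x, u) \text{ holds}\}$ is non-decreasing in $u$: truncating the witness sequence of $\delta_j(x, u_2)$ at any $u_1 < u_2$ yields some $\delta_{j'}(x, u_1)$ with $j' \le j$. Two sub-cases then arise. If $j^*(u) = 1$ for all $u$ in a left neighbourhood of $z_1^*$, the intervals $\psi_1$ on $(x, u)$ union to give $\psi_1$ on $(x, z_1^*)$ directly. Otherwise $\delta_j(x, u)$ for some $j \ge 2$ and some $u$ near $z_1^*$ supplies a $\varphi_1$-witness $z_1' < z_1^*$, which should serve as an alternative choice for $z_1$ and reduce the task to a shifted instance of the same problem with $n - 1$ intervals on $(z_1', x+1)$, to which the inductive hypothesis applies.

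The main obstacle is the rigorous handling of this alternative case: gluing the prefix decomposition supplied by the universal conjunct with the tail decomposition supplied by the existential conjunct requires the indices of the $\psi_i$ and $\varphi_i$ constraints to align at the join point, but the tail from the existential conjunct always begins with $\psi_1$ rather than the appropriate shifted $\psi_{i+1}$. Resolving this will likely require a greedy or supremum argument that selects $z_1$ as the leftmost $\varphi_1$-point admitting a compatible shifted tail, followed by a careful verification that the resulting shifted sub-problem satisfies the hypotheses needed to invoke the inductive hypothesis on $n - 1$.
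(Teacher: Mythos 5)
Your forward direction matches the paper's and is fine. The backward direction, however, has a genuine gap, and it sits exactly at the crux of the lemma. Your plan reduces the problem to a ``shifted instance with $n-1$ intervals on $(z_1',x+1)$'' and appeals to an inductive hypothesis on $n$, but the statement being proved concerns unit intervals: $\delta'$ quantifies $u$ over $(x,x+1)$ and the tail conjunct asserts $\delta(u,x+1)$, so the inductive hypothesis simply does not apply to the shorter interval $(z_1',x+1)$. Worse, even a suitably relativized hypothesis is not available from $\delta'(x)$: the first conjunct only yields prefixes $\delta_j(x,u)$ anchored at $x$ with the unshifted list $\psi_1,\varphi_1,\ldots$, and the tail $\delta(u,x+1)$ always begins with $\psi_1$ rather than $\psi_{i+1}$ --- which is precisely the index-alignment obstacle you flag in your final paragraph and leave unresolved. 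In particular, nothing in your data gives $\psi_2$ on the gap between your chosen $\varphi_1$-point $z_1'$ and the tail's first witness, so the proposed gluing can fail.

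The missing idea is to push the prefix information all the way to $x+1$ and splice there, with no induction on $n$. Since $\bigvee_{i=1}^{2n-1}\delta_i(x,u)$ holds for all $u\in(x,x+1)$ and there are finitely many indices, some single $r$ has $\delta_r(x,u)$ holding for $u$ arbitrarily close to $x+1$, and then $\delta_r(x,x+1)$ itself holds (if $r=2n-1$ you are done). Pad its witnesses $x=x_0<x_1<\cdots<x_k=x+1$ by setting $x_j=x+1$ for $k<j<n$, take witnesses $x_0',\ldots,x_n'$ of $\delta(u,x+1)$ from the second conjunct, and let $m$ be the least index with $x_m'<x_m$ (it exists since $x_{n-1}'<x+1=x_{n-1}$). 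The spliced sequence $x,x_1,\ldots,x_{m-1},x_m',\ldots,x_{n-1}',x+1$ witnesses $\delta(x)$: every interval it defines comes from one of the two witness sequences except $(x_{m-1},x_m')$, and minimality of $m$ gives $(x_{m-1},x_m')\subseteq(x_{m-1},x_m)$, so the required $\psi$ holds there too; the indexing of both sequences against the same list $\psi_1,\varphi_1,\ldots,\psi_n$ is what makes the constraints align automatically. Your monotonicity observation about $j^*(u)$ is true but does not substitute for this splicing argument, so as it stands the hard direction of your proof is incomplete.
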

\begin{proof}
$\delta(x) \Rightarrow \delta'(x)$.  Let $x_0, \ldots, x_{n} \in [x,x+1]$ be witnesses for the existential quantifiers in $\delta$.  From the definition of $\delta_i$, if $u \in (x_i,x_{i+1})$ (for $0 \leq i < n$) then $\delta_{2i+1}(x,u)$ holds.  Further, if $u = x_i$ (for $1 \leq i < n$) then $\delta_{2i}(x,u)$ holds.  Thus the first conjunct of $\delta'$ is satisfied for all $u \in (x,x+1)$.  Any $u \in (x,x_1)$ is a witness for $\exists_{x}^{x+1} u .\delta(u,x+1)$, and as $x_1\leq x+1$, $u \in (y-1,y)$ where $y=x+1$.  Thus the second conjunct holds and $\delta'(z)$ is satisfied.

\vspace*{1ex}

$\delta'(x) \Rightarrow \delta(x)$.  Note that if $\delta_r(x,u)$ holds for $u$ arbitrarily close to $x+1$ then $\delta_r(x,x+1)$ holds.  In particular, if $\delta_{2n-1}(x,u)$ holds for $u$ arbitrarily close to $x+1$ then we are done.  As $\bigvee_{i=1}^{2n-1} \delta_i(x,u)$ holds for all $u \in (x,x+1)$, there is some $r$ such that $\delta_r(x,u)$ holds arbitrarily close to $x+1$.  It follows that $\delta_r(x,x+1)$ is satisfied.  Suppose $r<2n-1$, and let $x_0 = x, x_1, \ldots, x_k=x+1$ be witnesses for the existential quantifiers in $\delta_r(x,x+1)$.  For convenience let $x_j=x+1$ for $k < j < n$.  Note that $k=\lceil\frac{r}{2}\rceil \leq n-1$, so it is always the case that $x_{n-1}=x+1$.  

From the second conjunct of $\delta'$, $\delta(u,x+1)$ is satisfied for some $u\in(x,x+1)$.  Let $x_0', \ldots, x_n' \in [x,x+1]$ be the witnesses for $\delta(u,x+1)$.  Let $m$ be the smallest index such that $x_m'<x_m$.  As $x_{n-1}'<x+1=x_{n-1}$ such an index must exist.  Then we claim that $x,x_1, \ldots, x_{m-1}, x_m', x_{m+1}', \ldots, x_{n-1}',x+1$ are witnesses for $\delta(x)$.  Every interval $I$ defined by these witnesses,
except $(x_{m-1},x_m')$, is either an interval defined by witnesses of $\delta_r(x,x+1)$ or an interval defined by witnesses of $\delta(u,x+1)$, so all points in $I$ satisfy $\psi_i$ or $\varphi_i$ as required.  For the remaining interval, we observe that $(x_{m-1},x_m') \subseteq (x_{m-1},x_m)$, thus all points satisfy $\psi_{m-1}$ as required.  Thus $\delta(x)$ is satisfied.
\end{proof}



\newpage
\appendix
\section{Proof of Lemma~\ref{lem:nondense} and ``only if'' of Theorem~\ref{thm:main1}}

To prove Lemma~\ref{lem:nondense} we first need to show that $\KKK$ does not contain any limit points if it is not dense.
\begin{lemma}\label{lem:discrete}
If $\KKK$ is not dense then there exists $\epsilon>0$ such that $|B_{\epsilon}(x)|=1$ for all $x \in \KKK$.
\end{lemma}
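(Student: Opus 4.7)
The plan is to exploit the translation-invariance of $\KKK$ (as an additive subgroup of $\RR$) to promote the non-density hypothesis --- which only hands us a single ``gap'' in $\KKK$ --- into a uniform separation around every point. Implicitly the lemma measures $B_\epsilon(x)$ against $\KKK$ (otherwise the cardinality $1$ would be absurd), so the goal is to produce $\epsilon>0$ with $(x-\epsilon,x+\epsilon)\cap\KKK=\{x\}$ for all $x\in\KKK$.

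First I would extract a witness to non-density: by assumption there exist $a<b$ in $\KKK$ with $(a,b)\cap\KKK=\emptyset$. Set $\epsilon := b-a>0$; this is the candidate uniform separation constant, and critically it does not depend on $x$.

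Next, fix an arbitrary $x\in\KKK$ and suppose for contradiction that some $y\in\KKK$ satisfies $0<|y-x|<\epsilon$. By symmetry assume $y>x$. Since $\KKK$ is closed under subtraction, $y-x\in\KKK$, and $0<y-x<b-a$. Adding $a\in\KKK$ gives $a+(y-x)\in\KKK$ with $a<a+(y-x)<b$, contradicting the choice of $(a,b)$. Hence $x$ is the only element of $\KKK$ in $B_\epsilon(x)$.

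I do not expect any real obstacle here: the entire argument is a one-line group-theoretic computation, and the only conceptual point is to notice that the gap witnessed by $(a,b)$ can be \emph{translated} to sit around any chosen $x\in\KKK$, which is exactly the content of being a subgroup. In particular the same $\epsilon$ works uniformly, which is what distinguishes the conclusion from mere local non-density.
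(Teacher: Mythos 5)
Your proof is correct and uses essentially the same argument as the paper: the key step in both is that a positive element of $\KKK$ smaller than the gap width $b-a$ could be added to $a$ to produce an element of $\KKK$ strictly between $a$ and $b$. The only difference is presentational---you argue directly with the explicit uniform constant $\epsilon = b-a$, whereas the paper runs the same computation contrapositively (assuming arbitrarily close pairs in $\KKK$ and deducing density).
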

\begin{proof}
Suppose for all $\epsilon>0$ there exist $x,y \in \KKK)$ such that $0<x-y<\epsilon$.  Take any $a<b \in \KKK$.  Then there exists $x,y \in \KKK$ such that $0 < x-y < b-a$.
That is,
\[a < a+x-y < b.\]
However, as $a,x,y \in \KKK$, it follows that $a+x-y \in \KKK$.  Thus $\KKK$ is dense.
\end{proof}

\begin{lemmaNo}{\ref{lem:nondense}}
Let $\KKK$ be an additive subgroup of $\RR$.  If $\KKK$ is not dense then $\KKK = \epsilon \ZZ$ for some $\epsilon > 0$.
\end{lemmaNo}
\begin{proof}
As $\KKK$ is not dense, it contains $\alpha \neq 0$, and without loss of generality, we may assume $\alpha>0$.  From Lemma~\ref{lem:discrete}, there exists $\delta>0$ such that  $|B_{\delta}(x)|=1$ for all $x \in \KKK$.  It follows that 
\[1 \leq |\KKK \cap (0,\alpha]| \leq \frac{\alpha}{\delta}.\]
Let $\epsilon$ be the smallest element of $\KKK \cap (0,\alpha]$, so $\epsilon$ is the smallest positive element of $\KKK$.  

We claim $\KKK = \epsilon \mathbb{Z}$.  Clearly as $\epsilon \in \KKK$, $\epsilon \mathbb{Z} \subseteq \KKK$.
Now take any $x \in \KKK$ and let $n = \lfloor \frac{x}{\epsilon}\rfloor$ and $\beta = x-n\cdot\epsilon$.  From the definition of $n$, $0 \leq \beta < \epsilon$.  As $x,\epsilon \in \KKK$ and $n \in \mathbb{Z}$, it follows that $\beta \in \KKK$.  From the definition of $\epsilon$, it follows that $\beta = 0$, so $x = n\cdot \epsilon$.  Thus $\epsilon \mathbb{Z} \supseteq \KKK$.
\end{proof}

Now suppose $\KKK$ is not dense (so $\KKK = \epsilon \ZZ$), but $\mtl_\KKK = \fo_\KKK$, so 
for any formula $\varphi^\epsilon(x) \in \fo_\KKK$ there exists a formula $\psi^\epsilon \in \mtl_\KKK$ such that for every signal $f:\domain \to 2^{\boldsymbol{P}}$ and $r \in \RR$ we have
\[ f \models \varphi^\epsilon[r] \quad \Longleftrightarrow\quad f, r \models \psi^\epsilon.\]
Consider any formula with one free variable 
$\varphi(x) \in \fo_\ZZ$.  Let $\varphi^\epsilon(x) \in \fo_{\epsilon \ZZ}$ be obtained by replacing each constant $c$ in $\varphi$ by $\epsilon c$.  It is 
clear that for every signal $f:\domain \to 2^{\boldsymbol{P}}$ and $r \in \RR$ we have
\[ f \models \varphi[r] \quad \Longleftrightarrow\quad f^\epsilon \models \varphi^\epsilon[\epsilon r]\]
where the signal $f^\epsilon$ is defined by $f^\epsilon(x) = f(\frac{x}{\epsilon})$.  
Dually, given any $\mtl_\KKK$ formula $\psi^\epsilon$, there exists an $\mtl_\ZZ$ formula $\psi$, given by replacing each interval endpoint $c$ in $\psi^\epsilon$ by $\frac{c}{\epsilon}$, such that
 for every signal $f:\domain \to 2^{\boldsymbol{P}}$ and $r \in \RR$ we have
\[ f,r \models \psi^\epsilon \quad \Longleftrightarrow\quad f^{1/\epsilon}, \frac{r}{\epsilon} \models \psi.\]
Thus for any formula $\varphi(x) \in \fo_{\ZZ}$ there exists a formula $\psi \in \mtl_\ZZ$ such that
\[ f \models \varphi[r] \quad \Longleftrightarrow\quad (f^\epsilon)^{1/\epsilon}, \frac{\epsilon r}{\epsilon} \models \psi(x).\]
But $(f^\epsilon)^{1/\epsilon} = f$ and $\frac{\epsilon r}{\epsilon} = r$, so $\psi$ is equivalent to $\varphi$.  Thus $\fo_\ZZ = \mtl_\ZZ$ which is a contradiction.

\section{Proof of Lemma~\ref{lem:HIF2}}
The following property of HIFs will prove useful:
\begin{lemma}\label{lem:HIF}
Let $\varphi(\overline{x}) = \exists y \in (s,t).\psi(\overline{x},y)$ be a HIF.  Then $\varphi(\overline{x})$ is equivalent to $u \in (s,t) \rightarrow \left(\theta^<(\overline{x}) \vee \theta^=(\overline{x}) \vee \theta^>(\overline{x})\right)$ where $u$ is a term of $\overline{x}$, $\theta^<(\overline{x}) = \exists y \in (s,u) \psi^<(\overline{x},y)$ and $\theta^<(\overline{x}) = \exists y \in (u,t) \psi^>(\overline{x},y)$ are HIFs and $\theta^=(\overline{x})$ is a HIF with strictly smaller quantifier depth than $\varphi$.
\end{lemma}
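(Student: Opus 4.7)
My plan is to perform the natural case split on the location of the existential witness $y$ relative to the chosen term $u$, and then verify that each of the three resulting pieces is a HIF with the required properties. Writing $s = x_l + c_l$ and $t = x_r + c_r$ as supplied by the HIF constructor that produced $\varphi$, and $u = x_i + c'$ by hypothesis (so $u$ is a term in the free variables $\overline{x}$), I take
\begin{align*}
\theta^<(\overline{x}) &:= \exists y \in (s, u).\, \psi^<(\overline{x}, y),\\
\theta^=(\overline{x}) &:= \psi(\overline{x}, u),\\
\theta^>(\overline{x}) &:= \exists y \in (u, t).\, \psi^>(\overline{x}, y),
\end{align*}
where $\psi^<$ and $\psi^>$ are syntactic rewritings of $\psi$ described below. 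Under the hypothesis $u \in (s, t)$, the semantic equivalence $\varphi \Leftrightarrow \theta^< \vee \theta^= \vee \theta^>$ is an immediate trichotomy on where the witness $y \in (s, t)$ of $\varphi$ sits relative to $u$.

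For $\theta^=$ I substitute $u = x_i + c'$ for $y$ throughout $\psi$. A structural induction on the HIF $\psi$ shows this preserves HIFhood: the atomic, boolean, and $<$-free cases are trivial; for an inner existential $\exists z \in (a,b).\, \chi$ whose bound $a$ or $b$ had the form $y+c$, substitution turns it into $x_i + (c' + c)$, which is still a legal HIF endpoint of the shape ``free variable plus constant,'' and the hierarchical side condition for that inner quantifier now concerns only genuinely free variables (with $x_i$ assuming the role formerly played by $y$). Since the outer $\exists y$ of $\varphi$ is eliminated and no new quantifier is introduced by the substitution, $\theta^=$ has strictly smaller quantifier depth than $\varphi$.

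For $\theta^<$ and $\theta^>$ the inner formula is $\psi$ modulo a rewriting needed to satisfy the hierarchical side condition relative to the new outer bounds. The new outer pairs $(s, u) = (x_l + c_l,\, x_i + c')$ and $(u, t) = (x_i + c',\, x_r + c_r)$ are of the required shape, but the intervals of $\psi$ involving $y$ were allowed to reference either outer variable $x_l$ or $x_r$, and after restriction only one of these still serves as an outer variable of the quantifier. For $\theta^<$, the left-hand inner intervals $(x_l + c_l + c, y+c)$ already match the new hierarchy, but each right-hand inner existential $\exists z \in (y+c,\, x_r+c_r+c).\, \chi$ must be rewritten. I will split its range at $x_i + c' + c$ as the disjunction of $\exists z \in (y+c,\, x_i+c'+c).\, \chi$, which matches the new schema, and $\exists z \in [x_i+c'+c,\, x_r+c_r+c).\, \chi$, whose bounds no longer mention $y$ at all and so fall outside the hierarchical condition (which constrains only intervals involving $y$). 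A symmetric construction yields $\psi^>$; a secondary structural induction, applied in tandem with an inductive appeal to the present lemma on the strictly simpler inner HIFs, then produces actual HIFs $\psi^<$ and $\psi^>$.

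The main obstacle is this last step: the syntactic bookkeeping that converts $\psi$ into $\psi^<$ and $\psi^>$ so that the hierarchical side condition is literally (not merely semantically) satisfied, propagated coherently through every nested level at which an interval of $\psi$ refers to $y$. The semantic trichotomy and the $\theta^=$ case are both routine; it is the HIF-closure argument for the restricted outer quantifiers that requires care.
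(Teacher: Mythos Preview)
Your approach is the same as the paper's: the trichotomy on the position of $y$ relative to $u$, the definition $\theta^{=}:=\psi[u/y]$, and the recursive splitting of inner quantifiers at the translated point $u+c$. The paper organizes the recursion more economically, however. Rather than describing one level of the splitting by hand, it observes that the subformula $\psi=\exists z\in(y+c,t+c).\chi$ is itself a HIF in the free variables $\overline{x},y$, and that $u+c$ is a term in those variables lying in $(y+c,t+c)$ whenever $y\in(s,u)$ and $u\in(s,t)$. The induction hypothesis (on the quantifier depth of the body) applied to $\psi$ with splitting point $u+c$ therefore directly yields $\psi\equiv\exists z\in(y+c,u+c).\eta^{<}\ \vee\ \eta^{=}\ \vee\ \exists z\in(u+c,t+c).\eta^{>}$ with all three disjuncts already HIFs; setting $\psi^{<}$ to this disjunction makes $\theta^{<}=\exists y\in(s,u).\psi^{<}$ a HIF by inspection of the top-level intervals.

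Your explicit description of the split has a gap precisely where the paper's recursion absorbs the work. You assert that the piece $\exists z\in[u+c,\,x_r+c_r+c).\chi$ is unproblematic because ``its bounds no longer mention $y$''; but that only addresses the hierarchical side condition for the \emph{outer} quantifier $\exists y$. It does not make this subformula a HIF in its own right: the body $\chi$ still contains intervals of the form $(y+c+c'',\,z+c'')$, which fail the HIF clause for $\exists z$ now that the left endpoint is $u+c$ rather than $y+c$. Symmetrically, your first piece $\exists z\in(y+c,u+c).\chi$ is not yet a HIF either, since $\chi$ retains intervals $(z+c'',\,t+c+c'')$ that do not match the new right endpoint $u+c$. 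Your closing ``inductive appeal to the present lemma on the strictly simpler inner HIFs'' is exactly what fixes both problems, and once you make that appeal you have reproduced the paper's argument verbatim; but the intermediate claim as stated is not correct. A minor additional point: the half-open interval $[u+c,\,x_r+c_r+c)$ is not in the HIF grammar, so that piece must in any case be decomposed into the substitution instance $\chi[u+c/z]$ together with the open quantifier $\exists z\in(u+c,t+c)$ --- which is exactly the $\eta^{=}\vee\exists z\in(u+c,t+c).\eta^{>}$ that the inductive application of the lemma supplies.
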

\begin{proof}
Clearly $\varphi(\overline{x})$ is equivalent to $u \in (s,t) \rightarrow \big(\exists y \in (s,u).\psi(\overline{x},y) \vee \psi(\overline{x},u) \vee \exists y \in (u,t).\psi(\overline{x},y)\big)$.  As $\psi$ has strictly smaller quantifier depth than $\varphi$, defining  $\theta^=(\overline{x}) = \psi(\overline{x},u)$ suffices.  We focus on the first disjunct to define $\theta^<$, the definition of $\theta^>$ from the third disjunct is analogous.  We proceed by induction on the quantifier depth of $\psi$.  If $\psi$ is quantifier-free then it is a HIF so set $\theta^< = \exists y \in (s,u).\psi$.  The only interesting inductive case is if $\psi = \exists z \in (y+c,t+c).\chi(\overline{x},y,z)$ for some $c \in \KKK$, the case $\psi = \forall z \in (y+c,t+c).\chi(\overline{x},y,z)$ is handled similarly.  Applying the induction hypothesis (recall $y \in (s,u)$ so $u+c \in (y+c,t+c)$) we can have that $\psi$ is equivalent to a disjunction of HIFs $\psi^< = \exists z \in (y+c,u+c) \eta^< \vee \eta^= \vee \exists z \in (u+c,t+c) \eta^>$ and it follows that $\theta^< = \exists y \in (s,u) \psi^<$ is a HIF.
\end{proof}

We now show that every $N$-bounded $\fo_\KKK$ formula with one free variable is equivalent to a HIF.  We start with a more general statement.
\begin{lemma}
Every $\fo_\KKK$ formula $\psi(\overline{x}) \in \mathrm{Bet}_{\KKK}(x_0-N,x_0+N)$ is equivalent to a disjunction $\bigvee_i (\kappa_i(\overline{x}) \wedge \varphi_i(\overline{x})$ where each $\kappa_i$ is a conjunction of constraints of the form $x_j + c < x_k + c'$ and each $\varphi_i$ is a HIF.
\end{lemma}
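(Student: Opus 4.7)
The plan is to proceed by structural induction on $\psi$. The base case and the Boolean cases are routine: atomic subformulas at the top level must be inequalities $x_j+c<x_k+c'$ (already of $\kappa$-form with $\varphi=\true$), since by the $\mathrm{Bet}$ condition a predicate atom $P_i(t)$ forces $t$ to be bound and so cannot appear outside a quantifier. Conjunction distributes, disjunction just concatenates disjuncts, and negation is handled by first refining the finite family of $\kappa_i$'s to a complete case split over all possible orderings of the finitely many terms $x_j+c$ that occur, so that each $\kappa_i$ is a minterm in the finite Boolean algebra generated by the ordering atoms; negation then maps a disjunction of such minterms to a disjunction of the remaining ones.

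The substantive case is the existential quantifier $\psi=\exists y\in(x_0-N,x_0+N).\,\chi(\overline{x},y)$. I first apply the induction hypothesis to $\chi$, regarding $y$ as a free variable, obtaining $\chi\equiv\bigvee_i(\kappa_i(\overline{x},y)\wedge\varphi_i(\overline{x},y))$ with each $\varphi_i$ a HIF. Distributing $\exists y$ over the outer disjunction reduces the problem to handling each $\exists y.(\kappa_i\wedge\varphi_i)$ individually. Next, I refine each $\kappa_i$ by taking a disjunction over all complete orderings of the finite set of terms $\{x_j+d\}$, where $d$ ranges over all constants appearing in $\kappa_i$ or in any interval of $\varphi_i$ (and including $x_0\pm N$). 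In each refined disjunct there are uniquely determined free-variable terms $x_l+c_l$ and $x_r+c_r$ that are the immediate predecessor and successor of $y$, and the position of every $x_k+c''$ relative to each $y+c$ is fixed. Splitting off the $y$-free constraints as $\kappa_i^0$, we are left with $\kappa_i^0\wedge\exists y\in(x_l+c_l,x_r+c_r).\,\varphi_i$.

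The final step is to transform $\exists y\in(x_l+c_l,x_r+c_r).\,\varphi_i$ into a HIF; this requires eliminating from $\varphi_i$ every interval of the form $(y+c,\,x_k+c''')$ or $(x_k+c''',\,y+c)$ in which $x_k+c'''$ is not an affine translate of $x_r+c_r$ or $x_l+c_l$ respectively. For each such offending inner quantifier I iteratively apply Lemma~\ref{lem:HIF} to split it at the term $x_r+c_r+c$ (or $x_l+c_l+c$), which is permissible because $\kappa_i^0$ forces this term to lie strictly inside the original interval. Each split replaces the offending quantifier by a disjunction of two quantifiers with non-violating intervals plus a residue of strictly smaller quantifier depth. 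Since the offending intervals are finite in number and each split strictly reduces a well-founded measure (the number of offenders counted with structural depth), the process terminates in a formula that satisfies the HIF hierarchy condition with respect to the outer quantifier over $y$.

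The main obstacle I anticipate is the bookkeeping in this last step: verifying that each application of Lemma~\ref{lem:HIF} is legal (the split point lies strictly inside the range of the quantifier it is applied to, a property secured by the complete ordering encoded in $\kappa_i^0$) and that the splitting terminates with a genuine HIF rather than merely an interval-guarded formula. The other cases are mechanical distribution of quantifiers over disjunctions together with the minterm refinement of atomic ordering constraints.
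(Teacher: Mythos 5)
There is a genuine gap in the existential case: you only eliminate offending guards that mix $y$ with a variable of $\overline{x}$, i.e.\ intervals of the form $(x_k+c''',y+c)$ or $(y+c,x_k+c''')$, but you never address inner quantifiers guarded by intervals of the form $(y+c,y+c')$, whose \emph{both} endpoints are translates of the variable $y$ about to be quantified. Such guards are not of the permitted shapes $(x_l+c_l+c,y+c)$ or $(y+c,x_r+c_r+c)$, so the formula you end up with need not be a HIF; and the downstream argument (the proof of Lemma~\ref{lem:remove}, which relies on the fact that in a HIF no guard has the same bound variable at both endpoints, i.e.\ $j\neq k$) breaks precisely on these intervals. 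Your splitting device cannot repair this case: as $y$ ranges over $(x_l+c_l,x_r+c_r)$ the interval $(y+c,y+c')$ slides with $y$, so neither $x_l+c_l+c$ nor $x_r+c_r+c$ (nor any fixed free-variable term) is guaranteed by $\kappa_i^0$ to lie inside it, which is exactly the hypothesis Lemma~\ref{lem:HIF} needs.

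The paper handles this case first, and it is the one place where the hypothesis $\psi\in\mathrm{Bet}_{\KKK}(x_0-N,x_0+N)$ is actually used: since all relevant points lie in $(x_0-N,x_0+N)$ and the interval $(y+c,y+c')$ has fixed width $c'-c$, a pigeonhole argument shows some term $x_0+n(c'-c)$ with $n\in\ZZ$, $c-N<n(c'-c)<c'+N$, lies inside it; one then takes a finite disjunction over these $n$, adds the constraints $y+c<x_0+n(c'-c)<y+c'$ (which also feed into the later choice of the bounding terms for $y$), and splits at this term via Lemma~\ref{lem:HIF}. Your proposal never invokes the $N$-boundedness at all, which is a symptom of the missing step; without it the lemma as stated is not established. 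The remainder of your argument (minterm refinement of the ordering constraints, choice of immediate predecessor and successor of $y$, and repeated application of Lemma~\ref{lem:HIF} to the mixed guards) matches the paper's proof in substance.
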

\begin{proof} 
We prove this by induction on the quantifier depth of $\psi$.  We can remove the equality predicate by substitution (and induction on the number of variables), so for simplicity we assume that all inequalities are strict and occur within the scope of an even number of negations.  In particular, we see that if the result holds for $\psi$ then it also holds for $\neg \psi$ as negations of inequality constraints are also inequality constraints and negations of HIFs are also HIFs.  Now if $\psi$ is quantifier-free the result follows by taking a disjunctive normal form of $\psi$.  So suppose $\psi = \exists y \varphi(\overline{x},y)$.  By the induction hypothesis we have $\varphi(\overline{x},y)$ is equivalent to $\bigvee_i (\kappa_i(\overline{x},y)\wedge \varphi_i(\overline{x},y))$, so $\psi$ is equivalent to 
\[ \bigvee_i (\kappa'(\overline{x} \wedge \exists y \bigwedge_{j=0}^n y < x_j+c_j \wedge \bigwedge_{j=0}^n y > x_j + c_j' \wedge \varphi_i(\overline{x},y)).\]
For technical reasons that will become clear shortly, we need to remove from each $\varphi_i$ intervals of the form $(y+c, y+c')$.  To do this, we observe that, by the pigeon-hole principle, $x_0 + n(c'-c) \in (y+c,y+c')$ for some $n \in \ZZ$.  As $\psi \in \mathrm{Bet}_{\KKK}(x_0-N,x_0+N)$ we have $c-N < n(c'-c) < c'+N$, so there are a finite number of possibilities for $n$, and as $c,c' \in \KKK$, $n(c'-c) \in \KKK$.  Thus for each interval $I = (y+c,y+c')$ occurring in $\varphi_i$ we take a disjunction over all integers $n$ in $(c-N,c'+n)$, add the constraints $y+c < x_0 + n(c-c') < y+c'$, and use Lemma~\ref{lem:HIF} to remove $I$.  We also assume that all constraints amongst $\overline{x}$ and $y$ implicitly defined\footnote{For example $\exists z \in (x,y)$ implicitly implies $x<y$} by $\varphi_i$ are included in the conjunction of inequalities.

The idea is to now take a disjunction over all possible choices for the greatest lower bound, $x_l+c_l$, and the least upper bound, $x_r+c_r'$, for $y$.  This adds some additional constraints (e.g.\ $x_l + c_l > x_j + c_j$ for all $j \neq l$) which we add to $\kappa'$ in each disjunct.  Now $\psi$ is equivalent to
\[ \bigvee_{i'} (\kappa''(\overline{x} \wedge \exists y \in (x_l + c_l, x_r + c_r')\, \varphi_i(\overline{x},y)).\]
We next apply Lemma~\ref{lem:HIF} to transform $\exists y \in (x_l + c_l, x_r + c_r')\, \varphi_i(\overline{x},y)$ into a HIF.  Technically we apply it several times, once for each interval defined by free variables bounded above by $y+c$ and not bounded below by $x_l+c_l+c$ and once for each interval defined by free variables bounded below by $y+c$ and not bounded above by $x_r+c_r'$.  The assumptions that there is no interval of the form $(y+c,y+c')$ and that all constraints implicitly defined by $\varphi_i$ are included in $\kappa_i$ together with the additional constraints imposed by the choice of $x_l$ and $x_r$ guarantee that $x_l+c_l+c$ is an element of any interval bounded above by $y+c$ and $x_r+c_r'+c$ is an element of any interval bounded below by $y+c$.  Thus Lemma~\ref{lem:HIF} guarantees that in the resulting HIF, $\varphi'_i$, all intervals involving $y$ and some free variable are either of the form $(x_l+c_l+c,y+c)$ or $(y+c, x_r+c_r'+c)$.  Thus $\exists y \in (x_l + c_l, x_r + c_r')\, \varphi_i'(\overline{x},y)$ is a HIF.
\end{proof}

Lemma~\ref{lem:HIF2} now follows as a corollary as inequality constraints over one variable can be trivially resolved.

\begin{corollary}
Every $N$-bounded $\fo_\KKK$ formula with one free variable is equivalent to a HIF.
\end{corollary}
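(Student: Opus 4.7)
The plan is to deduce the corollary directly from the preceding lemma, which already does the hard work of producing a HIF together with a conjunction of inequality constraints $\kappa_i$; the point of the corollary is that when there is only one free variable, the $\kappa_i$ collapse to constants.

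First I would apply the preceding lemma to the $N$-bounded formula $\varphi(x)$, viewed as an element of $\mathrm{Bet}_\KKK(x-N,x+N)$. This yields an equivalent disjunction $\bigvee_i (\kappa_i(x) \wedge \varphi_i(x))$, where each $\varphi_i$ is already a HIF and each $\kappa_i$ is a conjunction of atomic inequality constraints of the form $x_j + c < x_k + c'$. In our setting the only free variable is $x$, so every such atom has the shape $x + c < x + c'$, which holds for all interpretations iff $c < c'$ in $\KKK$ and fails for all interpretations otherwise. Each $\kappa_i$ is therefore a constant: either the conjunction is vacuously true or it is identically false.

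Next I would simply discard all disjuncts in which $\kappa_i$ evaluates to false, and drop the true $\kappa_i$ from the remaining disjuncts. This leaves an equivalent formula $\bigvee_{i \in S} \varphi_i(x)$, where $S$ indexes the surviving disjuncts and each $\varphi_i$ is a HIF. The final step is to observe that the class of HIFs is closed under disjunction by the second clause of the HIF definition (through De Morgan, since HIFs are closed under negation and conjunction, equivalently under $\vee$), so the whole disjunction is itself a HIF equivalent to $\varphi(x)$.

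There is no real obstacle here: the corollary is essentially a specialization of the preceding lemma to the one-free-variable case, and the only observation needed is that single-variable inequality constraints in an ordered structure are either tautologies or contradictions. Everything else is bookkeeping.
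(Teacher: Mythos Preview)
Your proposal is correct and matches the paper's approach exactly: the paper simply remarks that the corollary follows because ``inequality constraints over one variable can be trivially resolved,'' which is precisely your observation that each atom $x+c < x+c'$ is a constant, allowing the $\kappa_i$ to be eliminated and the remaining disjunction of HIFs to be a HIF. (Minor note: the HIF definition already includes closure under $\vee$ directly, so you need not invoke De~Morgan.)
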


\section{Proof of Lemma~\ref{lem:decomp}}
\begin{lemmaNo}{\ref{lem:decomp}}
Any decomposition formula $\delta(x,x+c)$ is equivalent to an $\mtl_\KKK$ formula.
\end{lemmaNo}
\begin{proof}
We proceed by induction
on the number $n$ of existential quantifiers in $\delta(x,x+c)$.

\subsubsection*{Base case}
Let $\delta(x,x+c) = \forall u\, (x<u<x+c \rightarrow
\psi(u))$, where $\psi$ is an $\ltl$ formula.  Clearly the $\mtl_\KKK$ formula
$\fBox_{(0,c)} \psi$ is equivalent to $\delta(x,x+c)$.

\subsubsection*{Inductive case}
Let $\delta(x,x+c)$ have the form
 \begin{align*}
&\exists z_0 \ldots \exists z_n\,(x=z_0< \cdots < z_n=x+c) \\
&\quad \wedge \bigwedge\{ \varphi_i(z_i) : 0 < i < n\}\\
&\quad \wedge \bigwedge\{ \forall u\,((z_{i-1}<u<z_i) \rightarrow \psi_i(u)) :
0 < i \leq n\} \, .
\end{align*}

The idea is
to define $\mtl_\KKK$ formulas $\alpha_k,\beta_k$, $0\leq k\leq 2n$, whose
disjunction is equivalent to $\delta(x,x+c)$.  The definition of these
formulas is based on a case analysis of the values of the
existentially quantified variables $z_1,\ldots,z_{n-1}$ in $\delta$.
Let $\nu \in \KKK$ be such that $0 < \nu \leq \frac{c}{2n}$.  As 
$\KKK$ is dense and non-trivial, such an element exists.
Consider the following $r = \lceil \frac{c}{\nu}\rceil \geq 2n$ subintervals of
$(x,x+c)$: $I_0 = (x,x+\nu), I_1 = [x+\nu,x+2\nu),\ldots,
I_{r-1} = [x+(\lceil\frac{c}{\nu}\rceil - 1) \nu,x+c)$.  
For simplicity we will assume $I_k = [x+k\nu,x+(k+1)\nu)$, the special instances of $I_0$ and $I_{r-1}$ where this is not the case are easily handled.

We identify three cases
according to the distribution of the $z_i$ among these intervals:
\begin{enumerate}
\item $\{z_1,\ldots,z_{n-1}\} \subseteq I_k$ for
  some $k<\frac{r}{2}$;
\item $\{z_1,\ldots,z_{n-1}\} \subseteq I_k$ for some $k$, $\frac{r}{2} \leq k < r$;
\item There exists $k$ and $l$, $1\leq l < n-1$, such
  that $z_l < x+k\nu \leq z_{l+1}$ (i.e., $z_1,\ldots,z_{n-1}$
  are not all contained in a single interval).
\end{enumerate}

\paragraph{Case 1.} 
Assume that $k<\frac{r}{2}$ and consider the following $\mtl_\KKK$ formula:
\[
\renewcommand{\arraystretch}{1.5}
\begin{array}{llll}
\alpha_k \, :=&\psi_1 \mathrel{\mathbf{U}_{[k\nu,(k+1)\nu)}}  
                 \\&\quad( \varphi_1 \wedge 
               ( \psi_2\mathrel{\mathbf{U}_{(0,\nu)}}\\
              &\quad\quad( \varphi_2 \wedge 
               ( \psi_3 \mathrel{\mathbf{U}_{(0,\nu)}}\\
              & \quad\qquad\qquad \ddots\\
              &\quad\qquad( \varphi_{n-2} \wedge
               ( \psi_{n-1} \mathrel{\mathbf{U}_{(0,\nu)}}\\
              &\quad\qquad\quad( \varphi_{n-1} \wedge\fBox_{(0,\nu)} \psi_n)) \cdots )
              \\&\wedge \quad\fBox_{((k+1)\nu,c)} \psi_n \, .\notag
\end{array}\]
By construction, if $\alpha_k$ holds at a point $x$ then the formulas
$\psi_1,\varphi_1,\ldots,\varphi_{n-1},\psi_n$ hold in sequence
along the interval $(x,x+c)$.  In particular, $\psi_n$ holds on the
interval starting at the time that the subformula
$\fBox_{(0,\nu)} \psi_n$ begins to hold and extending to time
$x+c$ (thanks to the ``overlapping'' subformula
$\fBox_{((k+1)\nu,c)} \psi_n$).  Thus $\alpha_k$ implies
$\delta(x,x+c)$.  Conversely, if $\delta(x,x+c)$ holds with the existentially
quantified variables $z_1,\ldots,z_{n-1}$ all lying in the interval
$[x+k\nu,x+(k+1)\nu)$, then clearly $\alpha_k$ also
holds.

\paragraph{Case 2.} 

Suppose that $\frac{r}{2} \leq k < r$ and consider the following $\mtl_\KKK$ formula:
\[
\renewcommand{\arraystretch}{1.5}
\begin{array}{lll}
\alpha_k \, := & \fDia_{\{c\}}  \big[ \psi_n \mathrel{\mathbf{S}_{(c-(k+1)\nu, c-k\nu)}}\\
&\qquad ( \varphi_{n-1} \wedge  ( \psi_{n-1} \mathrel{\mathbf{S}_{(0,\nu)}}\\
&\qquad\quad( \varphi_{n-2} \wedge ( \psi_{n-2} \mathrel{\mathbf{S}_{(0,\nu)}}\\
&\qquad\qquad\qquad\quad         \ddots \\
&\qquad\qquad\quad( \varphi_{2} \wedge  ( \psi_{2} \mathrel{\mathbf{S}_{(0,\nu)}} \\
&\qquad\qquad\qquad( \varphi_{1} \wedge \pBox_{(0,\nu)} \psi_1)) \cdots )\big ]\\ 
&\wedge \quad \fBox_{(0,k\nu)} \psi_1\, .
\end{array}\]

The definition of $\alpha_k$  is according to similar principles as in
Case 1.  If it holds at a point $x$ then the sequence of past
operators ensures that the formulas
$\psi_n,\varphi_{n-1},\psi_{n-1},\ldots,\varphi_1,\psi_1$ hold in
sequence, backward from $x+c$ to $x$.  Thus $\alpha_k$ implies
$\delta(x,x+c)$.  Conversely, if $\delta(x,x+c)$ holds with the existentially
quantified variables $z_1,\ldots,z_{n-1}$ all lying in the interval
$[x+k\nu,x+(k+1)\nu)$, $\frac{r}{2} \leq k < r$, then clearly
$\alpha_k$ also holds.

\paragraph{Case 3.}  
Suppose that $z_l < x+k\nu \leq z_{l+1}$ for some $k$ and $l$, $1 \leq l < n-1$.  

The idea is, for each choice of $l$, to decompose $\delta(x,x+c)$ into a property $\sigma_l$ holding
on the interval $(x,x+k\nu)$ and a property $\tau_l$ holding
on the interval $(x+k\nu,x+c)$.  We then apply the induction
hypothesis to transform $\sigma_l$ and $\tau_l$ to equivalent $\mtl_\KKK$
formulas.  To this end, define
\begin{align*}
\sigma_l(x) \, := \, & \exists z_0 \ldots \exists z_{l+1}
                  (x=z_0 < \cdots < z_{l+1} = x+k\nu)\\
     & \wedge \bigwedge \{ \varphi_i(z_i) : 0 < i \leq l \}\\
     & \wedge \bigwedge \{ \forall u((z_{i-1}<u<z_i) \rightarrow \psi_i(u)) : 
                           1 \leq i \leq l+1 \} 
\end{align*}
and
\begin{align*}
\tau_l(x) \, := \, & \exists z_l \ldots \exists z_n
                  (x=z_l < \cdots < z_n = x+c-k\nu)\\
     & \wedge \bigwedge \{ \varphi_i(z_i) : l < i < n \}\\
     & \wedge \bigwedge \{ \forall u((z_{i-1}<u<z_i) \rightarrow \psi_i(u)) : 
                           l < i \leq n \} \, .
\end{align*}

Now $\sigma_l \in \mathrm{Bet}_{\{0\}}(x,x+k\nu)$ and $\tau_l \in \mathrm{Bet}_{\{0\}}(x,x+c-k\nu)$
so it follows by the induction hypothesis that $\sigma_l$ and $\tau_l$ have equivalent
$\mtl_\KKK$ formulas $\sigma_l^\dag$ and $\tau_l^\dag$ respectively.

We now define
\[ \beta_k := \bigvee_{1 \leq l < n-1} \Big( \sigma_l^\dag \wedge \fDia_{\{k\nu\}} \big((\psi_{l+1}   \wedge \tau_l^\dag) \vee (\varphi_{l+1} \wedge \tau_{l+1}^\dag)\big)\Big) \, .\]
From the definition of $\sigma_l$ it is clear that $\beta_k$ matches
$\delta(x,x+c)$ on $(x,x+k\nu)$.  For the remaining interval
  $[x+k\nu,x+c)$ we distinguish between two cases: if
    $x+k\nu < z_{l+1}$, then $\fDia_{\{k\nu\}}(\psi_{l+1}
    \wedge \tau_l^\dag)$ agrees with $\delta(x,x+c)$; and if $x+k\nu
    = z_{l+1}$ then $\fDia_{\{k\nu\}}(\varphi_{l+1} \wedge
    \tau_{l+1}^\dag)$ agrees with $\delta(x,x+c)$.  Thus $\beta_k$ implies
    $\delta(x,x+c)$.  Conversely if $\delta(x,x+c)$ holds with the existentially
    variables $z_1,\ldots,z_{n-1}$ satisfying the conditions of Case 3
    then one of the disjuncts, and hence $\beta_k$, must hold.
\end{proof}

\end{document}